\DeclarePairedDelimiter{\ceil}{\lceil}{\rceil}
\theoremstyle{plain}
\newtheorem{definition}{Definition}[section]
\newtheorem{lemma}[definition]{Lemma}
\newtheorem{proposition}[definition]{Proposition}
\newtheorem{example}[definition]{Example}
\theoremstyle{remark}
\newtheorem{remark}[definition]{Remark}
\newcommand{\IE}{\mathbb{E}}
\newcommand{\IN}{\mathbb{N}}
\newcommand{\IR}{\mathbb{R}}
\newcommand{\IZ}{\mathbb{Z}}
\newcommand{\Union}{\bigcup}
\newcommand{\qtext}[1]{\quad\text{#1}\quad} 
\newcommand{\ra}{\rightarrow}
\newcommand{\floor}[1]{\lfloor#1\rfloor}
\newcommand{\float}{\mathrm{float}}
\title{circllhist}
\author{
  Heinrich Hartmann \\
  \texttt{heinrich.hartmann@circonus.com} \\
  Circonus \\
  \And
  Theo Schlossnagle \\
  \texttt{theo.schlossnagle@circonus.com} \\
  Circonus
}
\begin{document}

\maketitle

\begin{abstract}
  The circllhist histogram is a fast and memory efficient data structure for summarizing large
  numbers of latency measurements.  It is particularly suited for applications in IT infrastructure
  monitoring, and provides nano-second data insertion, full mergeability, accurate approximation of
  quantiles with a-priori bounds on the relative error.

  Open-source implementations are available for C/lua/python/Go/Java/JavaScript.
\end{abstract}

\section{Introduction}

Latency measurements have become an important part of IT infrastructure and application monitoring.
The latencies of a wide variety of events like requests, function calls, garbage collection, disk
IO, system-call, CPU scheduling, etc. are of great interest of engineers operating and developing IT systems.

There are a number of technical challenges associated with managing and analyzing latency data.  The
volume emitted by a single data source can easily become very large.  Furthermore, data has to be
collected and aggregated from a large number of different sources.  The data has to be stored over
long time periods (months, years), in order to allow historic comparisons and long-term service
quality estimations (SLOs).

In order to address these challenges a compression scheme has to be applied, that drastically
reduces the size of the data to be stored and transmitted.  Such a compression scheme needs to allow
at minimum (1) arbitrary aggregation of already compressed data, (2) accurate quantile
approximations, with a-priori bounds on the relative error (3) accurate counting of requests larger
or lower than a given threshold. Furthermore it's beneficial if (4) information about the full
distribution is retained, so that general probabilistic modeling techniques can be applied.

Traditionally monitoring tools, either store raw data on which calculations are performed (e.g.
ELK\footnote{\url{https://www.elastic.co/what-is/elk-stack}}) or compute latency quantiles on each host
separately and store them as numeric time series
(e.g. statsd\footnote{\url{https://github.com/statsd/statsd}}).  Both approaches have obvious
drawbacks.  The high volume of data makes raw data storage uneconomical for sources like request
latencies, and impractical for high volume sources like function-call or system-call latencies.
Direct calculation of quantiles does not allow further aggregation, so that accurate quantiles for
the total population can not be calculated.

\begin{figure}
  \includegraphics[width=\textwidth]{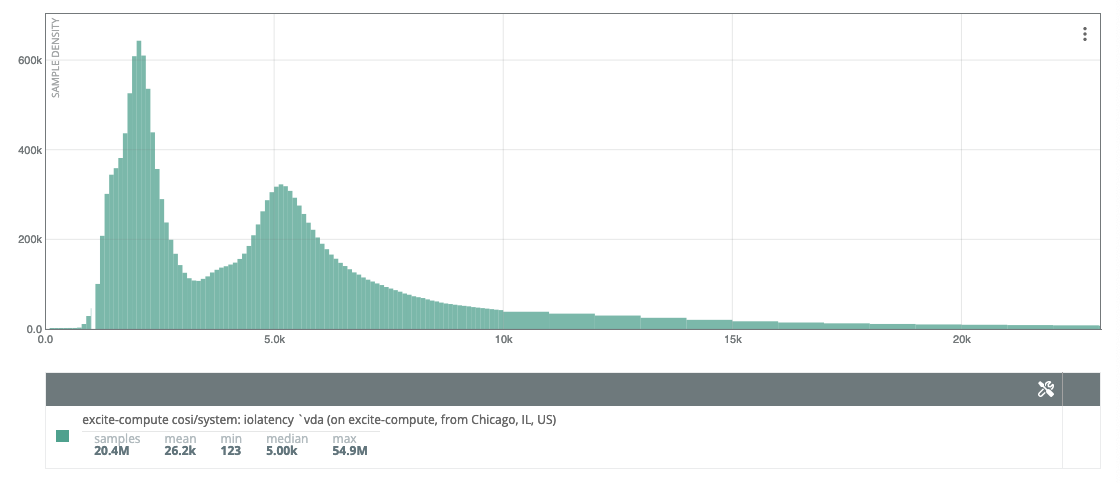}
  \caption{Circllhist representation of 20M block-level disk I/O latencies measured over the course
    of one month.}
  \label{fig:demo}
\end{figure}

Circonus has addressed this problem with the circllhist data-structure, that we describe in this document.
It has been in and production use since 2011.
Figure \ref{fig:demo} shows an example of a visualized circllhist in the Circonus product.
We have talked about this at various conferences and blogs (e.g. \cite{TS0},\cite{TS1},\cite{TS2}, \cite{HH1}).
An open source implementation is available at \cite{libcircllhist}.
However, no academic paper was published until now.

This document describes the circllhist data-structure and compares it to other methods
that have been adopted by other monitoring vendors since then.

\section{Related Work} \label{sec:rw}

There is also a fair bit of work in the academic literature on the problem of efficiently
calculating aggregated quantiles since 1980. This runs under the name ``mergeable summaries'' and
``quantile sketches''. A good summary of these methods can be found in \cite{dd}, Section 1.2.
Here we focus on methods that have been adopted in practice.

A very similar approach to ours has been suggested by G. Tene from Aszul Systems who developed
a High Definition Range (HDR) Histogram data-structure \cite{hdr} to capture latency
data for benchmarking applications.

T. Dunning and O. Ertl developed the t-digest data structure in \cite{tdigest}, which is used in the
Wavefront monitoring product. The t-digest aggregates
nearby points into clusters of adaptive size, in such a way, that high resolution data is available
at the tails of the distribution, where it's most critical for applications.

The Prometheus monitoring system \cite{prom} has added a simple histogram data-type that allows a
rough summarization of the distribution with a set of numeric time series.

Most recently Data Dog has published a logarithmic histogram data structure DDSketch in \cite{dd}.

In the next section we will develop some theory around general histogram summaries, that allows us
to precisely define HDR Histograms, DDSketches and the circllhist in section \ref{sec:eval}.

\section{Theory}

\begin{figure}
  \includegraphics[width=\textwidth]{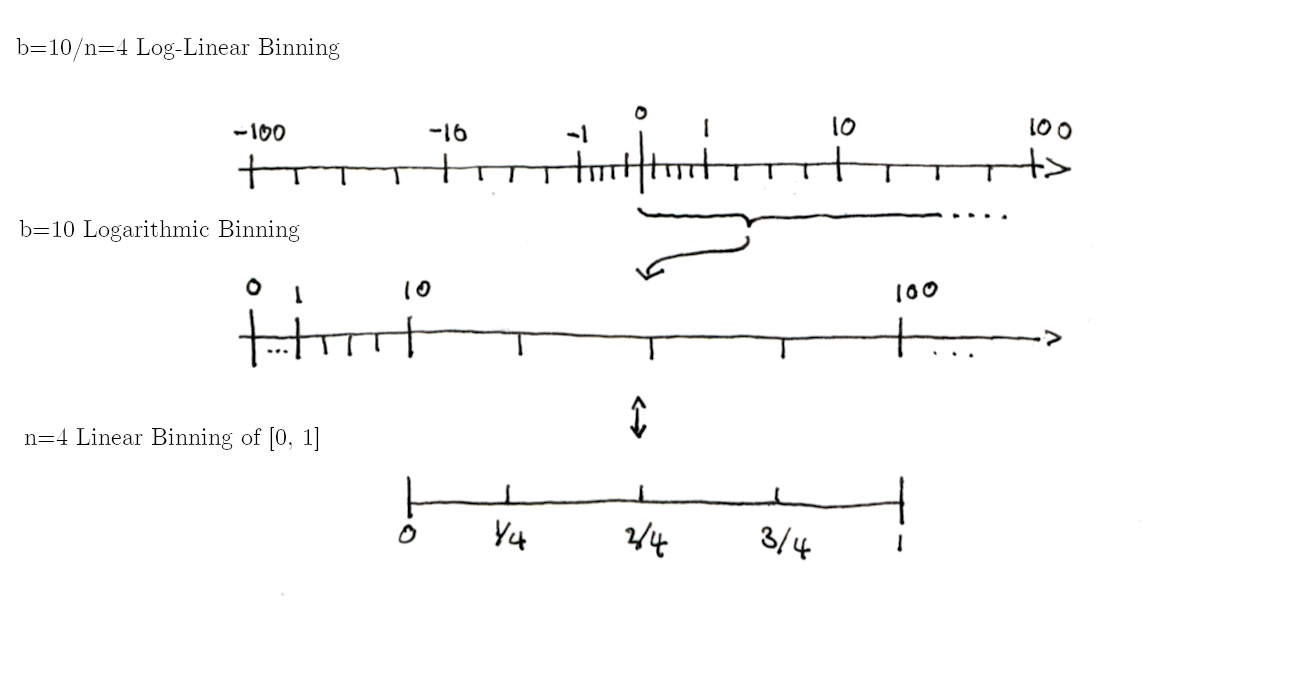}
  \caption{Construction of the Log-Linear Binning}
  \label{fig:llbins}
\end{figure}

In this section we develop an abstract theory of histograms to a degree that allows us to
formally define circllhist as a linear refinement of a logarithmic histogram structure.

The basic idea behind log-linear histograms like the circllhist is illustrated in Figure \ref{fig:llbins}.
We start with a logarithmic binning of the real axes, that has bins at the powers of ten.
\begin{center}
  \begin{BVerbatim}
    ... 0.01, 0.1, 1, 10, 100, ...
  \end{BVerbatim}
\end{center}
We divide each logarithmic bin into $n=90$ equally spaced segments. In this way the bin boundaries
are precisely the base-10, precision-2 floating point numbers:
\begin{center}
\begin{BVerbatim}
... 1.0,  1.1,  1.2,  ...   9.9,
     10,   11,   12,  ...    99,
    100,  110,  120,  ...   990, ...
\end{BVerbatim}
\end{center}
Those are the bin boundaries for the circllhist data structure.
When samples are inserted into the circllhist, we retain counts of the number of samples in each bin.
This information allows us to approximate the original location of the inserted samples with a
maximal relative error less than $5\%$.

\subsection{Binnings}

\begin{definition}
  Let $D \subset \IR$ be a connected subset of the real axes (e.g. $D=\IR, D=[0,1)$).
  A binning of $D$ is a collection of intervals $Bin[i], i \in I$, that are disjoint and collectively cover the binning domain $D$:
  \begin{align*}
    D = \Union_{i\in I} Bin[i] \qtext{and} Bin[i] \cap Bin[j] = \emptyset \qtext{for} i \neq j.
  \end{align*}
  The map that associates to each $x \in D$ the unique index $i$ so that $x \in Bin[i]$ is called
  binning map and is denoted as $bin(x) = i$.
\end{definition}

\begin{remark}
  The binning map $bin: D \ra I$ determines the binning via $Bin[i] = \{ x \in D \,|\, bin(x) = i \}$.
\end{remark}

\begin{example}
  The linear binning of $\IR$ is given by $I = \IZ$, with
  \begin{align*}
    Bin[i] = [i, i+1)  \qtext{and} bin(x)=\floor{x}
  \end{align*}
\end{example}

\begin{example}
  The length $n$ linear binning of $[0,1)$ is given by $I = \{0, \dots, n-1\}$, with
    \begin{align*}
      Bin^{Lin}_n[i]   = [ \frac{i}{n}, \frac{i+1}{n} )
      \qtext{and}
      bin^{Lin}_n(x) = \floor{x \cdot n}
    \end{align*}
\end{example}

\begin{example}\label{example:log}
  The logarithmic binning with basis $b > 0$ of $\IR_{>0}$ is given by $I=\IZ$, with
  \begin{align*}
    Bin^{Log}_b[i] = [b^i, b^{i+1})
    \qtext{and}
    bin^{Log}_b(x)=\floor{\log_b(x)}
  \end{align*}
\end{example}

\begin{definition}\label{ref}
  Given a map $\alpha: I \ra J$, and a binning $(I, Bin)$, we can define a new binning
  $(J, Bin^*)$ by setting:
  \begin{align*}
    Bin^*[j] = \Union_{i, \alpha(i) = j} Bin[i], \qtext{and} bin^*(x) = \alpha(bin(x))
  \end{align*}
  In this situation we call $(J, Bin^*)$ a coarsening of $(I, Bin)$, and $(I, Bin)$ a refinement of $(J, Bin^*)$.
\end{definition}

\begin{definition}
  Given a binning $Bin[i], i\in I$ with half-open bins $Bin[i] = [a_i, b_i)$.
  The length-n linear refinement of $(I, Bin)$, is given by the index set $I \times \{0,\dots, n-1\}$,
  bins
  \begin{align}\label{eq:lref}
    L_nBin[i,j] = [ a_i + \frac{j}{n}(b_i - a_i), a_i + \frac{j+1}{n}(b_i - a_i) )
  \end{align}
\end{definition}

\begin{lemma}
  The binning map of the length-n linear refinement is given by:
  \begin{align}\label{eq:lrefmap}
    L_nbin(x) = ( bin(x), \floor{\frac{x - a_{bin(x)}}{b_{bin(x)} - a_{bin(x)}} \cdot n } ).
  \end{align}
\end{lemma}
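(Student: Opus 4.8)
The plan is to unwind the definition of the binning map directly: by definition $L_nbin(x)$ is the unique pair $(i,j) \in I \times \{0,\dots,n-1\}$ for which $x \in L_nBin[i,j]$, so I would exhibit the claimed pair, verify that it satisfies the membership condition, and rely on uniqueness to conclude that it is the value of the map. I would determine the two coordinates separately, first pinning down $i$ and then $j$.

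For the first coordinate I would observe that, for a fixed $i$, the sub-bins $L_nBin[i,j]$, $j=0,\dots,n-1$, partition the original bin $Bin[i] = [a_i, b_i)$. This follows by inspecting the endpoints in \eqref{eq:lref}: the left end of $L_nBin[i,0]$ is $a_i$, the right end of $L_nBin[i,n-1]$ is $a_i + \frac{n}{n}(b_i - a_i) = b_i$, and the right end of $L_nBin[i,j]$ coincides with the left end of $L_nBin[i,j+1]$, so the half-open intervals tile $[a_i,b_i)$ with no gaps or overlaps. Consequently any $x$ lying in some $L_nBin[i,j]$ also lies in $Bin[i]$, and the defining property of the original binning map forces $i = bin(x)$.

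For the second coordinate I would set $i = bin(x)$ and introduce the normalized coordinate $t = \frac{x - a_i}{b_i - a_i}$, which is well defined because $b_i > a_i$ and satisfies $t \in [0,1)$ since $x \in [a_i, b_i)$. Dividing the membership condition $a_i + \frac{j}{n}(b_i-a_i) \le x < a_i + \frac{j+1}{n}(b_i-a_i)$ through by $b_i - a_i > 0$ turns it into $\frac{j}{n} \le t < \frac{j+1}{n}$, i.e. $j \le tn < j+1$. By the characterizing property of the floor function this holds for exactly one integer, namely $j = \floor{tn} = \floor{\frac{x - a_{bin(x)}}{b_{bin(x)} - a_{bin(x)}} \cdot n}$, which is the claimed expression. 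Since $tn \in [0,n)$ this $j$ automatically lies in $\{0,\dots,n-1\}$, so the pair is a legitimate index of the refinement.

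I do not expect a genuine obstacle here; the argument is essentially bookkeeping. The only points demanding care are the half-open boundary conventions -- making sure the tiling of $Bin[i]$ leaves no gaps or double-covered endpoints, and that a boundary point $x = a_i + \frac{j}{n}(b_i-a_i)$ is assigned to the sub-bin on its right -- together with the verification that the computed $j$ stays within the allowed range. Both are handled by tracking strict versus non-strict inequalities, which is exactly what the floor characterization $m \le y < m+1 \iff m = \floor{y}$ encodes.
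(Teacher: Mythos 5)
Your proposal is correct and follows essentially the same route as the paper: both arguments pass to the normalized coordinate $\phi(x) = (x - a_i)/(b_i - a_i)$ mapping $Bin[i]$ to $[0,1)$ and apply the floor characterization $j \leq n\phi(x) < j+1 \iff j = \floor{n\phi(x)}$ to verify membership $x \in L_nBin[i,j]$. Your extra bookkeeping (the explicit tiling of $Bin[i]$ by the sub-bins, pinning down the first coordinate, and checking $j \in \{0,\dots,n-1\}$) is sound and slightly more thorough than the paper, which defers the partition statement to the following lemma.
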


\begin{proof}
  We have to show that $x \in L_nBin[ L_nbin(x) ]$ for all $x \in D$.
  Let $(i,j) = L_nbin(x)$.
  Since $i = bin(x)$, we know that $x \in [a_i, b_i)$.
  Now we consider the linear map $\phi(x) = (x-a_i)/(b_i-a_i)$ which maps $Bin[i]$ bijectively to $[0,1)$.
  We have $j = \floor{ n \phi(x) }$ by definition of $L_nbin(x)$.
  To show that $x \in L_nBin[ L_nbin(x) ]$ it suffices to verify
  that $\phi(x) \in \phi(L_nBin[i,j]) = [ \frac{j}{n}, \frac{j+1}{n} )$.
  And indeed,
  \begin{align*}
    \frac{j}{n} = \frac{\floor{ n \phi(x) }}{n} \leq \phi(x) <
    \frac{\floor{ n \phi(x) } + 1 }{n} =  \frac{j+1}{n}.
  \end{align*}
\end{proof}

\begin{lemma}
  The length-n linear refinement of a binning $(I, Bin)$ is a refinement in the sense of Definition \ref{ref}.
  The index map is given by $\alpha(i,j) = i$ for $i \in I$, $j \in \{0,\dots,n-1\}$.
\end{lemma}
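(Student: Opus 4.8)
The plan is to unwind Definition \ref{ref} with the fine binning taken to be the length-$n$ linear refinement $(I \times \{0,\dots,n-1\}, L_nBin)$, the coarse index set taken to be $I$, and the index map $\alpha(i,j) = i$. To certify that this exhibits $(I, Bin)$ as a coarsening of the refinement, I must check exactly the two identities that Definition \ref{ref} produces: that the coarsened bins reproduce the original bins, $Bin[i] = \Union_{(i',j)\,:\,\alpha(i',j)=i} L_nBin[i',j]$, and that the coarsened binning map reproduces the original map, $bin(x) = \alpha(L_nbin(x))$.

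First I would handle the bins. Since $\alpha(i',j) = i$ forces $i' = i$, the relevant union collapses to $\Union_{j=0}^{n-1} L_nBin[i,j]$. Writing $Bin[i] = [a_i, b_i)$ and recalling \eqref{eq:lref}, the $j$-th piece is the half-open interval $[\,a_i + \tfrac{j}{n}(b_i-a_i),\, a_i + \tfrac{j+1}{n}(b_i-a_i)\,)$. I would observe that these $n$ intervals are contiguous: the right endpoint of the $j$-th piece equals the left endpoint of the $(j{+}1)$-th, the first piece starts at $a_i$, and the last piece ends at $a_i + \tfrac{n}{n}(b_i-a_i) = b_i$. Because the pieces are half-open they join seamlessly with neither overlap nor gap, so the union telescopes to $[a_i, b_i) = Bin[i]$.

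Second I would handle the binning map, which is immediate from the preceding lemma: equation \eqref{eq:lrefmap} gives $L_nbin(x) = (bin(x), \floor{\cdots})$, and applying $\alpha(i,j)=i$ simply discards the second coordinate, yielding
\begin{align*}
  \alpha(L_nbin(x)) = bin(x),
\end{align*}
which is precisely the map $bin^*$ prescribed by Definition \ref{ref}. Together with the bin identity this shows $(I, Bin)$ is the coarsening of the linear refinement under $\alpha$, hence the refinement is a refinement in the required sense.

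There is no genuinely hard step here; the statement is essentially bookkeeping once \eqref{eq:lrefmap} is in hand. The only point demanding care is the tiling argument for the bins, where I must lean explicitly on the half-open convention $[\,\cdot\,,\,\cdot\,)$ to guarantee that consecutive sub-intervals abut exactly and partition $Bin[i]$ rather than merely cover it; an open or closed convention would introduce spurious gaps or double-counted endpoints.
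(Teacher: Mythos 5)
Your proof is correct and in substance matches the paper's: both arguments reduce to checking that the $n$ half-open sub-intervals of \eqref{eq:lref} exactly tile $Bin[i]$, you by telescoping endpoints directly in the original coordinates, the paper by transporting everything through the affine bijection $\phi(x) = (x-a_i)/(b_i-a_i)$ so that the claim becomes the evident covering of $[0,1)$ by the intervals $[j/n, (j+1)/n)$. Your extra verification that $\alpha(L_nbin(x)) = bin(x)$ is harmless but strictly redundant, since the binning map is determined by the bins (the paper's remark following the definition of a binning), which is why the paper checks only the union identity.
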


\begin{proof}
  We have to show that $\Union_{j} L_nBin[i,j] = Bin[i]$, for all $i \in I$.
  Again we consider the linear bijection $\phi(x) = (x-a_i)/(b_i-a_i)$, which maps $Bin[i]$ to $[0,1)$ and $L_nBin[i,j]$ to $[j/n, (j+1)/n)$.
  Hence it suffices to show that $[j/n, (j+1)/n)$ cover $[0,1)$ for $j=0,\dots,n$ which is evident.
\end{proof}

Now we are in a position to define log-linear binnings.

\begin{definition}\label{def:ll}
  The base $b$, length $n$ log-liner binning of $\IR_{>0}$ is the length-n linear refinement of the base-b Logarithmic binning of $\IR_{>0}$.
\end{definition}

\begin{proposition}\label{prop:ll}
  Let $b,p$ be positive integers. The boundaries of the base $b$, length $n = b^p - b^{p-1}$ log-linear
  binning or $\IR_{>0}$ are precisely the base-p precision-p floating point numbers:
  \begin{align*}
    \float_{b,p}(e, d) = \frac{d}{b^{p-1}} \cdot b^e = d \cdot b^{e-p+1} \qtext{with} e \in \IZ, d \in \{ b^{p-1}, \dots, b^p - 1 \}
  \end{align*}
  The binning map is given by
  \begin{align*}
    bin(x) = (e(x), d(x) - b^{p-1}), \qtext{with} e(x) = \floor{\log_b(x)},\; d(x) = \floor{x \cdot b^{-e(x) + p - 1}}
  \end{align*}
  with values $e(x) \in \IZ$ and $d(x) \in \{b^{p-1}, \dots, b^{o} - 1\}$.
  This binning is also called base-b precision-p log-linear binning.
\end{proposition}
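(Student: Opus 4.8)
The plan is to reduce everything to the linear-refinement binning map of equation \eqref{eq:lrefmap}, since Definition \ref{def:ll} identifies the log-linear binning as the length-$n$ linear refinement of the logarithmic binning of Example \ref{example:log}. For that logarithmic binning, bin $i$ is $[a_i, b_i) = [b^i, b^{i+1})$, so $b_i - a_i = b^i(b-1)$, and the refinement length factors as $n = b^p - b^{p-1} = b^{p-1}(b-1)$. These substitutions, together with the integer hypothesis on $b$ and $p$, are the only ingredients needed.

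First I would identify the boundaries. The left endpoint of $L_nBin[i,j]$ is $a_i + \frac{j}{n}(b_i - a_i) = b^i + \frac{j}{b^{p-1}(b-1)}\, b^i(b-1) = b^i \frac{b^{p-1}+j}{b^{p-1}}$. Setting $e = i$ and $d = b^{p-1} + j$, this equals $d \cdot b^{e-p+1} = \float_{b,p}(e,d)$. As $i$ ranges over $\IZ$ and $j$ over $\{0,\dots,n-1\}$, the shifted index $d = b^{p-1}+j$ ranges over $\{b^{p-1},\dots,b^p-1\}$, so the set of left endpoints is precisely the claimed set of floating point numbers.

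Next I would compute the binning map by plugging $a_i = b^i$, $b_i = b^{i+1}$, $n = b^{p-1}(b-1)$, and $bin(x) = \floor{\log_b x} = e(x)$ into \eqref{eq:lrefmap}. The second coordinate then simplifies to $\floor{\frac{x - b^{e(x)}}{b^{e(x)}(b-1)}\, b^{p-1}(b-1)} = \floor{x\, b^{p-1-e(x)} - b^{p-1}}$. The one genuinely load-bearing step is to pull the integer $b^{p-1}$ out of the floor, giving $\floor{x\, b^{p-1-e(x)}} - b^{p-1} = d(x) - b^{p-1}$; this is legitimate precisely because $b^{p-1}$ is an integer, which is where the hypothesis is used. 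Combined with the first coordinate $e(x)$, this yields the stated formula $bin(x) = (e(x), d(x) - b^{p-1})$.

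Finally I would pin down the ranges. From $e(x) = \floor{\log_b x}$ we have $b^{e(x)} \le x < b^{e(x)+1}$; multiplying through by $b^{p-1-e(x)}$ gives $b^{p-1} \le x\, b^{p-1-e(x)} < b^p$, so $d(x) = \floor{x\, b^{p-1-e(x)}}$ lies in $\{b^{p-1},\dots,b^p-1\}$, and equivalently the refinement coordinate $d(x) - b^{p-1}$ lies in $\{0,\dots,n-1\}$, consistent with the refinement lemma. I anticipate no real obstacle: the argument is substitution plus the integer-floor identity, and the only thing to watch carefully is the index shift $d = j + b^{p-1}$ relating the abstract refinement coordinate $j$ to the mantissa $d$.
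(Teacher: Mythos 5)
Your proposal is correct and follows essentially the same route as the paper's proof: substituting $a_i = b^i$, $b_i = b^{i+1}$, $n = b^p - b^{p-1}$ into equations \eqref{eq:lref} and \eqref{eq:lrefmap} and shifting the index by $d = j + b^{p-1}$. You even go slightly further than the paper by explicitly justifying pulling the integer $b^{p-1}$ out of the floor and by verifying the range $b^{p-1} \le x\, b^{p-1-e(x)} < b^p$, steps the paper leaves implicit; the only piece the paper makes explicit that you gloss over is that the top boundary $(d+1)b^{e-p+1}$ for $d = b^p - 1$ is again a floating point number with exponent $e+1$, which your left-endpoint argument covers implicitly since adjacent bins share boundaries.
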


\begin{example}
  According to Proposition \ref{prop:ll}, the base-10 precision-1 binning has bin boundaries at
  \begin{align*}
      \{ d \cdot 10^e  \,|\, e \in \IZ, d \in \{ 1, \dots, 9 \} \} = \{ \dots 0.8, 0.9,\; 1, 2 \dots, 8, 9,\; 10, 20 \dots \}
  \end{align*}
  binning map
  \begin{align*}
    bin(x) = (e(x), d(x) - 1), \qtext{with} e(x) = \floor{\log_{10}(x)},\; d(x) = \floor{x / 10^{e(x)}}.
  \end{align*}
\end{example}

\begin{proof}
  To proof Proposition \ref{prop:ll}, we compute the log-linear bin boundaries using equation \ref{eq:lref}
  with $a_i = b^i, b_i = b^{i+1}$ and $n = b^p - b^{p-1}$:
  \begin{align*}
    Bin[e,j] &= [ b^e + \frac{j}{b^p - b^{p-1}}(b^{e+1} - b^e), b^e + \frac{j + 1}{b^p - b^{p-1}}(b^{e+1} - b^e) ) \\
      &= [ b^{e-p+1}(b^{p-1} + j), b^{e-p+1}(b^{p-1} + j + 1) ) \\
      &= [ d b^{e-p+1}, (d + 1) b^{e-p+1} )
  \end{align*}
  where we set $d = b^{p-1} + j$. If $j$ runs through $1 \dots n$, then $d$ runs through $b^{p-1},\dots, b^p -1$.
  This shows that the lower boundaries are exactly the base-b precision-p floating point numbers.
  For the upper boundary note, that the if $d=b^p-1$ then $(d+1)b^{e-p+1} = b^{p-1} b^{(e+1)-p+1}$ is again
  a base-b precision-p floating point number (with a larger exponent).

  The binning map can be explicitly calculated using Equation \ref{eq:lrefmap} as:
  \begin{align*}
    bin(x) &= (e(x), k(x)), \qtext{with} e(x) = \floor{\log_b(x)} \\
    k(x) & = \floor{ \frac{x - b^{e}}{b^{e+1} - b^e} (b^p - b^{p-1}) }
    = \floor{x \cdot b^{-e + p - 1}} - b^{p-1}
    = d(x) - b^{p-1}
  \end{align*}
  as claimed.
\end{proof}

\begin{definition}\label{def:circllhist}
  The circllhist binning is the base-10 precision-2 log-linear binning extended to the real axes,
  with bins:
  \begin{align*}
    Bin[+1,e,d] &= [ d \cdot {10}^{e-1}, (d + 1) 10^{e-1} ), \quad e \in \IZ, d \in \{ 10, \dots, 99 \} \\
    Bin[0,0,0]  &= \{ 0 \} \\
    Bin[-1,e,d] &= [ -d \cdot {10}^{e-1}, -(d + 1) \cdot 10^{e-1} ).
  \end{align*}
  The binning map is given by:
  \begin{align*}
    bin(x)  &= (+1, e, d), e = \floor{\log_{10}(x)}, d = \floor{x \cdot 10^{-e - 1}} \\
    bin(0)  &= (0, 0, 0) \\
    bin(-x) &= (-1, e, d)
  \end{align*}
  for $x > 0$.
\end{definition}

\begin{proposition} \label{prop:rec}
  The binning map of the circllhist can be recursively computed with the following algorithm:

\begin{BVerbatim}[fontfamily=tt]
  function bin(x)
    if x == 0:
      return (0,0,0)
    if x < 0:
      (s, e, d) := bin(-x)
      return (-1, e, d)
    if x < 10:
      return bin(x * 10) - (0, 1, 0)
    if x > 100:
      return bin(x / 10) + (0, 1, 0)
    else: # 10 <= x < 100:
      return (+1, 1, floor(x))
  end
\end{BVerbatim}

In particular, the circllhist binning map can be computed without use of the logarithm function.
\end{proposition}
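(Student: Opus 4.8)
The plan is to verify that the recursive procedure agrees with the closed-form binning map of Definition \ref{def:circllhist} on every input, and then to observe that the procedure never evaluates a logarithm. The argument will rest on two algebraic ``shift'' identities for positive arguments, a base-case computation, and an induction on the recursion depth that simultaneously yields termination and correctness.

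First I would record the shift identities that justify the recursive branches. For $x > 0$ write, as in Proposition \ref{prop:ll} with $b=10, p=2$, $e(x) = \floor{\log_{10}(x)}$ and $d(x) = \floor{x \cdot 10^{1-e(x)}}$, so that $bin(x) = (+1, e(x), d(x))$ with $d(x) \in \{10,\dots,99\}$. Since $\log_{10}(10x) = 1 + \log_{10}(x)$ and $\floor{1+y} = 1 + \floor{y}$ for every real $y$, one gets $e(10x) = e(x) + 1$, and substituting this into the formula for $d$ gives $d(10x) = \floor{10x \cdot 10^{-e(x)}} = \floor{x \cdot 10^{1-e(x)}} = d(x)$. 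Hence
\begin{align*}
  bin(10x) = bin(x) + (0,1,0), \qtext{equivalently} bin(x) = bin(10x) - (0,1,0),
\end{align*}
and the analogous computation for $x/10$ yields $bin(x) = bin(x/10) + (0,1,0)$. These are exactly the equalities asserted by the ``$x<10$'' and ``$x>100$'' branches.

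Next I would dispatch the remaining branches. The $x=0$ and $x<0$ cases read off directly from Definition \ref{def:circllhist}: for $x<0$ the procedure computes $bin(-x) = (+1,e,d)$ and returns $(-1,e,d)$, which is precisely $bin(x)$ for negative $x$. For the base case I would check that $10 \le x < 100$ forces $e(x) = 1$ (because $\log_{10}(x) \in [1,2)$), whence $d(x) = \floor{x \cdot 10^{0}} = \floor{x}$ and $bin(x) = (+1, 1, \floor{x})$, matching the returned value. Correctness on $\IR_{>0}$ then follows by induction on the nonnegative integer $N = |e(x)-1|$: when $N=0$ we are in the base case, and when $N>0$ the recursive call is applied to an argument ($10x$ if $e(x)\le 0$, or $x/10$ if $e(x)\ge 2$) whose exponent has moved one step toward $1$, so $N$ strictly decreases; the appropriate shift identity converts the inductively known value of the recursive call into the claimed value of $bin(x)$. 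The same bound $N = |e(x)-1|$ gives termination, since each recursive call decrements a nonnegative integer.

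The step I expect to require the most care is not any single calculation but the bookkeeping of the branch boundaries, and here I would flag a genuine edge case. The condition $e(x)=1$ characterizing the base case is equivalent to $10 \le x < 100$, so the recursive ``divide'' branch must fire on $x \ge 100$ (i.e.\ $e(x)\ge 2$); the literal test $x>100$ routes the point $x=100$ into the base case, where it would return $(+1,1,100)$ with an out-of-range third coordinate instead of the correct $bin(100)=(+1,2,10)$, so the comparison is to be read as $x \ge 100$. With that reading the three regions $e(x)\le 0$, $e(x)=1$, $e(x)\ge 2$ partition $\IR_{>0}$ and the induction closes. Finally, the ``no logarithm'' claim is immediate by inspection: the procedure uses only sign tests, comparisons against the constants $10$ and $100$, multiplication and division by $10$, negation, and one application of $\floor{\cdot}$ in the base case, none of which requires evaluating $\log_{10}$.
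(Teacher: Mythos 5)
Your proof follows essentially the same route as the paper's: the paper's proof is a two-sentence sketch (termination because finitely many multiplications/divisions by $10$ bring $x$ into $[10,100)$, plus the assertion that each branch is ``straightforward to verify''), and your induction on $N = |e(x)-1|$ with the shift identities $bin(10x) = bin(x) + (0,1,0)$ and $bin(x/10) = bin(x) - (0,1,0)$ is exactly that argument carried out in full. Your one genuine addition is the boundary analysis, and it is a correct catch that the paper's proof glosses over: as written, the test \texttt{x > 100} routes $x = 100$ (and hence every $x = 10^k$, $k \geq 2$, whose recursion path lands on $100$) into the base case, returning $(+1,1,100)$ with mantissa outside the admissible range $d \in \{10,\dots,99\}$, whereas $100 \in [10 \cdot 10^{2-1}, 11 \cdot 10^{2-1})$ gives $bin(100) = (+1,2,10)$; the comparison must indeed be $x \geq 100$ for the three regions $e(x) \leq 0$, $e(x) = 1$, $e(x) \geq 2$ to partition $\IR_{>0}$ and for the induction to close. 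So your write-up is not merely a more rigorous version of the paper's proof --- it identifies an off-by-one error in the stated algorithm that the phrase ``straightforward to verify'' conceals, and the proposition is true only under your corrected reading of the guard.
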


\begin{proof}
  The recursion terminates since every positive number $x$ can be brought into range $10 \leq x <
  100$ with a finite number of divisions or multiplications by $10$.  It's straight forward to
  verify that each case computes valid results assuming that the results of the recursive call are
  correct.
\end{proof}

\begin{remark}
  If $x,e$ are integers, then the binning map of the number $x \cdot 10^{e}$ can be computed without
  the use of floating point arithmetic as $bin(x) + (0,e,0)$ using Algorithm \ref{prop:rec}.

  This is of practical relevance when used in an environment which do not have floating
  point arithmetic available. One example being nano-second latencies measured in the Linux kernel,
  or embedded devices.
\end{remark}

\subsection{Paretro Midpoints}

\begin{proposition} \label{prop:pdist}
  Given an interval $[a,b]$ in $\IR_{>0}$ the unique point $m$ in $[a,b]$ so that the maximal
  relative distance $rd(m, y) = |m-y|/y$ to all other points in $[a,b]$ is minimized
  by the paretro midpoint
  $m = 2ab / (a + b)$.

  The maximal relative distance to the paretro midpoint is assumed at the interval boundaries
  $rd(m,a) = rd(m,b) = (b - a) / (a + b)$.
\end{proposition}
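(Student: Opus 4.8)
The plan is to reduce the two-variable optimization to a one-variable minimax problem. First I would fix $m \in [a,b]$ and study the inner maximum $F(m) = \max_{y \in [a,b]} rd(m,y)$ as a function of the probe point $y$. Splitting the interval at $y = m$, observe that for $y \geq m$ we have $rd(m,y) = 1 - m/y$, which is increasing in $y$, while for $y \leq m$ we have $rd(m,y) = m/y - 1$, which is decreasing in $y$. Hence on each side the maximum over $y$ is attained at an endpoint, and the inner maximum collapses to the clean two-term expression
\[
  F(m) = \max\left\{ \frac{m-a}{a},\ \frac{b-m}{b} \right\}.
\]

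Second, I would minimize $F(m)$ over $m \in [a,b]$. The first term $m \mapsto (m-a)/a$ is strictly increasing and the second $m \mapsto (b-m)/b$ is strictly decreasing; the pointwise maximum of a strictly increasing and a strictly decreasing function is strictly decreasing up to their unique crossing point and strictly increasing afterward, so the minimizer is exactly the point where the two branches coincide. Setting $(m-a)/a = (b-m)/b$ and clearing denominators gives $b(m-a) = a(b-m)$, hence $m(a+b) = 2ab$ and $m = 2ab/(a+b)$. This value lies in $[a,b]$ because it is the harmonic mean of $a$ and $b$, which always sits between the two values.

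Finally I would confirm the stated error value and uniqueness. Substituting back yields $m - a = a(b-a)/(a+b)$, so $rd(m,a) = (m-a)/a = (b-a)/(a+b)$, and by the symmetric computation $rd(m,b) = (b-a)/(a+b)$ as well, matching the claim that the maximal relative distance is assumed at both boundaries. Uniqueness follows directly from the strict monotonicity of the two branches of $F$, which forces the minimizer of their upper envelope to be unique.

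I do not expect a genuine obstacle here. The only step requiring a little care is the reduction, namely justifying that the inner supremum over $y$ is always attained at an endpoint rather than in the interior of $[a,b]$; once the monotonicity of $rd(m,\cdot)$ on each side of $m$ is recorded, the remainder is the standard fact that the maximum of an increasing and a decreasing function is minimized at their intersection.
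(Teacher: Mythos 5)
Your proof is correct and follows essentially the same route as the paper: reduce the inner maximum over $y$ to the two endpoints, then locate the unique minimizer of the upper envelope of the strictly increasing branch $(m-a)/a$ and the strictly decreasing branch $(b-m)/b$ at their crossing point $m = 2ab/(a+b)$. In fact you supply one detail the paper only asserts --- the monotonicity of $rd(m,\cdot)$ on each side of $m$, which justifies that the inner supremum is attained at $y=a$ or $y=b$.
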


\begin{proof}
  We have to minimize the function $maxrd(x) = max_{y\in[a,b]} rd(x, y)$ over $[a,b]$.
  The maximum $max_{y\in[a,b]} rd(x, y)$ is attained either for $y = a$ or $y = b$,
  hence $maxrd(x) = max\{ rd(x,a), rd(x,b) \}$.

  Note that the function $f(x) = rd(x, a)$ is continues and strictly monotonically increasing on $[a,b]$, with $f(a) = 0, f(b) > 0$,
  and the function $g(x) = rd(x, b)$ is continues and strictly monotonically decreasing on $[a,b]$, with $g(a) > 0$ and $g(b) = 0$.
  The point $m$ is the unique point in $[a,b]$ where both functions are equal, with
  \begin{align*}
    rd(m, a) = \frac{b - a}{a + b} = rd(m, b)
  \end{align*}
  Now if $a \leq x < m$ then $rd(x, b) = g(x) > g(m)$ and so $maxrd(x) > maxrd(m) = g(m)$.
  Similarly if $m < x \leq b$ then $rd(x, a) = f(x) > f(m)$ and so $maxrd(x) > maxrd(m) = f(m)$.

  This shows that $x=m$ is the unique minimum of $maxrd(x)$ on $[a,b]$.
\end{proof}

The following proposition gives a probabilistic interpretation of the location of relative distance minimizing midpoint.
Recall, that the expected value of a uniformly distributed random variable $X \sim U[a,b]$ is the midpoint $\IE[X] = (a+b)/2$.

\begin{proposition}
  Given an interval $[a,b]$, and an a=2 paretro distributed random variable $X$, then
  the paretro midpoint is the conditional expectation:
  \begin{align*}
    \IE[ X \, | \, X \in [a,b] \,] = 2ab / (a + b).
  \end{align*}
\end{proposition}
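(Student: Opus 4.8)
The plan is to reduce the statement to a single elementary integral computation against the Pareto density. First I would recall that a Pareto random variable with shape parameter $\alpha = 2$ and scale parameter $x_m > 0$ has density $f(x) = 2 x_m^2 / x^3$ on $[x_m, \infty)$, so that $f(x) \propto x^{-3}$ on its support. Assuming the interval lies inside the support (i.e. $x_m \le a$), the only feature of $f$ that will enter the conditional expectation is its cubic-decay shape on $[a,b]$.

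Second, I would write down the conditional density explicitly. Conditioning on $\{X \in [a,b]\}$ gives, for $x \in [a,b]$,
\[
  f_{X \mid X \in [a,b]}(x) = \frac{f(x)}{\int_a^b f(t)\,dt} = \frac{x^{-3}}{\int_a^b t^{-3}\,dt},
\]
where the common factor $2 x_m^2$ cancels between numerator and denominator. Consequently the conditional expectation becomes a ratio of two elementary integrals,
\[
  \IE[X \mid X \in [a,b]] = \frac{\int_a^b x \cdot x^{-3}\,dx}{\int_a^b x^{-3}\,dx} = \frac{\int_a^b x^{-2}\,dx}{\int_a^b x^{-3}\,dx}.
\]

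Third, I would evaluate the two integrals by the power rule: the numerator is $[-x^{-1}]_a^b = (b-a)/(ab)$, and the denominator is $[-\tfrac12 x^{-2}]_a^b = (b^2 - a^2)/(2a^2 b^2) = (b-a)(b+a)/(2a^2 b^2)$. Forming the quotient, the factor $(b-a)$ cancels and routine simplification collapses the expression to $2ab/(a+b)$, which is exactly the Pareto midpoint of Proposition \ref{prop:pdist}.

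I do not expect a genuine obstacle here, since the argument is just a normalized moment of a power-law density and everything follows from two applications of the power rule. The only points deserving a word of care are (i) verifying that the scale parameter $x_m$ and the overall normalizing constant drop out, so the answer depends only on the endpoints $a,b$; and (ii) keeping the shape parameter $\alpha = 2$ notationally separate from the left endpoint $a$ of the interval, a collision the statement's phrase ``a=2'' invites but which the computation above sidesteps.
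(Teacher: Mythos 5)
Your proposal is correct and follows essentially the same route as the paper: both reduce the conditional expectation to the ratio $\int_a^b x^{-2}\,dx \big/ \int_a^b x^{-3}\,dx$ (the paper writes the density as $C/x^{a+1}$ and lets $C$ cancel, exactly your observation that the scale parameter drops out) and evaluate by the power rule to obtain $2ab/(a+b)$. Your side remark about the notational collision between the shape parameter $a=2$ and the interval endpoint $a$ is a fair criticism of the statement, but it does not change the argument.
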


\begin{proof}
  The paretro distribution has density $p(x) =C \cdot 1/x^{a+1}$, for some positive constant $C$, so for $a=2$ we get
  \begin{align*}
    \IE[ X \, | \, X \in [a,b] \,] = \frac{\int_a^b x p(x) dx}{\int_a^b p(x) dx}
    = \frac{\int_a^b x^{-2}  dx}{\int_a^b x^{-3} dx} = 2 \frac{b^{-1} - a^{-1}}{b^{-2}- a^{-2}}
    = \frac{2}{b^{-1} + a^{-1}}
    = 2 \frac{ab}{a + b}
  \end{align*}
  which proves the claim.
\end{proof}

\begin{proposition}\label{prop:21}
  The maximal relative distance to a paretro midpoint in the circllhist binning is $1/21 \approx 4.76\%$.
\end{proposition}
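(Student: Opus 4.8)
The plan is to combine the exact formula for the maximal relative distance from Proposition \ref{prop:pdist} with the explicit description of the bins from Definition \ref{def:circllhist}. Proposition \ref{prop:pdist} already tells us that for any interval $[a,b] \subset \IR_{>0}$ the maximal relative distance to the paretro midpoint equals $(b-a)/(a+b)$. So the whole problem reduces to evaluating this single quantity on each circllhist bin and taking the largest value.

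First I would take a generic positive bin $Bin[+1,e,d] = [\,d \cdot 10^{e-1},\, (d+1)\cdot 10^{e-1}\,)$ with $e \in \IZ$ and $d \in \{10, \dots, 99\}$, and set $a = d \cdot 10^{e-1}$, $b = (d+1)\cdot 10^{e-1}$. A one-line computation gives $b - a = 10^{e-1}$ and $a + b = (2d+1)\cdot 10^{e-1}$, so the common factor $10^{e-1}$ cancels and
\begin{align*}
  \frac{b-a}{a+b} = \frac{1}{2d+1}.
\end{align*}
The key observation is that this value is \emph{independent} of the exponent $e$ and depends only on the mantissa digit $d$; moreover $d \mapsto 1/(2d+1)$ is strictly decreasing.

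Next I would maximize $1/(2d+1)$ over the admissible range $d \in \{10, \dots, 99\}$. Because the expression is decreasing in $d$, the maximum is attained at the smallest digit $d = 10$, giving $1/(2\cdot 10 + 1) = 1/21 \approx 4.76\%$. Since the values form a finite set (one per digit $d$, identical across all $e$), this is a genuine maximum rather than a mere supremum. The negative bins $Bin[-1,e,d]$ contribute exactly the same values, as reflecting the whole configuration about the origin leaves all relative distances unchanged, and the degenerate bin $\{0\}$ has relative distance $0$; neither affects the answer.

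I do not expect any real obstacle here: the argument is an algebraic simplification followed by an elementary monotonicity observation. The only point that deserves care is fixing the range of $d$ correctly — it starts at $d=10$ precisely because we use precision $p=2$, so that $d \in \{b^{p-1}, \dots, b^p-1\} = \{10, \dots, 99\}$. An off-by-one at the lower endpoint would change the constant entirely; for instance, the precision-1 binning (with $d$ starting at $1$) would yield the far worse bound $1/3$.
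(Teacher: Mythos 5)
Your proof is correct and follows essentially the same route as the paper: substitute the circllhist bin boundaries into the $(b-a)/(a+b)$ formula of Proposition \ref{prop:pdist}, simplify to $1/(2d+1)$, and extremize over $d \in \{10,\dots,99\}$ at $d=10$. In fact you are slightly more careful than the paper's own proof, which neglects the negative bins and the zero bin and says ``minimized'' where ``maximized'' is meant.
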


\begin{proof}
  Substituting the bin boundaries into the formula given in Proposition \ref{prop:pdist} we find
  $(b - a)(a + b) = 1/(2d + 1)$ for the bin $Bin[e,d], e \in \IZ, d \in \{ 10, \dots, 99 \}$.
  This is minimized for $d = 10$ with a value of $1/21$ as claimed.
\end{proof}

\subsection{Histograms}

Once we have established the binnings, histograms are easy to define.

\begin{definition}\label{def:hist}
  A histogram with domain $D \subset \IR$ is a binning $Bin[i],I$, together with a count function $H: I \ra \IN_{0}$.
\end{definition}

Given a dataset and a binning, we can associate a histogram.

\begin{definition}
  Given binning $Bin[i],I$ of $D \subset \IR$, and a dataset $X = (x_1,\dots,x_n)$ with values in
  $D$, we define the histogram summary of $X$ as the histogram with binning $Bin[i],I$ and count
  function
  \begin{align*}
    H_X(i) = \# \{ j \, | \, x_j \in Bin[i] \, \}.
  \end{align*}
  This means, that $H_X(i)$ counts the number of points of $X$ lying in $Bin[i]$.
\end{definition}

Histograms can be freely merged without loosing information.

\begin{definition}
  Let $H_1, H_2$ be histograms for the same binning. The merged histogram $H_1 + H_2$ has count function:
  \begin{align*}
    (H_1+H_2)(i) = H_1(i) + H_2(i).
  \end{align*}
\end{definition}
The merge operation is clearly associative and commutative.

We can easily see that the histogram merge computes is compatible with merge (concatenation) of datasets:
\begin{proposition}
  Given binning $Bin[i],I$ of $D \subset \IR$, and two datasets $X = (x_1,\dots,x_n),Y=(y_1,...y_m)$ with
  values in $D$. Let $Z=(x_1, \dots, x_n, y_1, \dots, y_m)$ be the merged dataset, then $H_Z = H_X + H_Y$.
\end{proposition}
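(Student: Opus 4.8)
The plan is to prove the equality of the two count functions $H_Z$ and $H_X + H_Y$ pointwise: I would fix an arbitrary bin index $i \in I$ and show that $H_Z(i) = (H_X + H_Y)(i)$. Since $i$ is arbitrary, this establishes equality as functions on $I$, which is what the definition of the merge requires.

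First I would unfold the definition of the histogram summary applied to the concatenated dataset $Z = (z_1, \dots, z_{n+m})$, where $z_j = x_j$ for $1 \le j \le n$ and $z_j = y_{j-n}$ for $n+1 \le j \le n+m$. By definition, $H_Z(i) = \#\{ j \in \{1,\dots,n+m\} \mid z_j \in Bin[i] \}$, so the whole argument reduces to counting this single index set.

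Next I would partition the index set $\{1,\dots,n+m\}$ into the two disjoint blocks $A = \{1,\dots,n\}$ and $B = \{n+1,\dots,n+m\}$, corresponding to the entries inherited from $X$ and from $Y$ respectively. The counting set then splits as a disjoint union according to this partition, and I would count each block separately: the restriction to $A$ reproduces exactly $\#\{ j \le n \mid x_j \in Bin[i] \} = H_X(i)$, while the restriction to $B$, after the reindexing $j \mapsto j - n$, reproduces $\#\{ k \le m \mid y_k \in Bin[i] \} = H_Y(i)$. Applying additivity of cardinality over a disjoint union of finite sets then gives $H_Z(i) = H_X(i) + H_Y(i)$, which is precisely $(H_X + H_Y)(i)$.

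The argument involves no genuine obstacle: the only point requiring care is the index bookkeeping, namely verifying that $A$ and $B$ are disjoint and that $j \mapsto j - n$ restricts to a bijection from $B$ onto $\{1,\dots,m\}$. It is worth noting that the disjointness and covering properties of the binning play no role here---the claim is really a statement about a single fixed set $Bin[i]$ together with the additivity of counting---so the same proof goes through verbatim for any indexed family of sets, not only for binnings.
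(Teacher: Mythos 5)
Your proposal is correct and follows essentially the same argument as the paper: both proofs fix a bin index $i$, split the counting set $\{j \mid z_j \in Bin[i]\}$ into the indices $j \leq n$ (contributing $H_X(i)$) and $j > n$ (contributing $H_Y(i)$ after the shift $j \mapsto j-n$), and conclude by additivity of cardinality over the disjoint union. Your added remark that the disjointness and covering properties of the binning are never used is a fair observation, but it does not change the substance of the proof.
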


\begin{proof}
  We have
  \begin{align*}
    H_Z(i) &= \# \{ j \, | \, z_j \in Bin[i] \, \} =
    \# \{ j \leq n \, | \, x_j = z_j \in Bin[i] \, \} +
    \# \{ j > n \, | \, z_j = y_{j-n} \in Bin[i] \, \} \\
    &= H_X(i) + H_Y(i). \qedhere
  \end{align*}
\end{proof}

\def\hyph{{\hbox{-}}}

\begin{remark}
  Let $H$ be a histogram summary of a dataset $X$, and a threshold value $y$.  The count functions
  $count\hyph below_X(y) = \#\{ i | x_i < y \}$ and $count\hyph above_X(y) = \#\{ i | x_i \geq y \}$
  are of great practical interest. If the bin boundaries line-up with the threshold, we can
  get exact approximations of those functions from the histogram.

  In the case of the circllhist, we get exact counts for every 2-digit precision decimal floating point number.

  For logarithmic histograms we get exact counts at the powers of the base $b^e, e\in \IZ$.
\end{remark}

\subsection{Histogram Operations}

Given a histogram we can try to approximate statistics (means, percentiles, etc.) of the original dataset.
There are three strategies for doing so, that we have found valuable in practice:

\begin{enumerate}
\item Derive a probability distribution from a histogram (with uniform distribution inside the
  bins), and apply the statistics to this distribution.
\item Resample the dataset by placing all points inside a bin at equally spaced distances ({\it fair resampling}).
\item Resample the dataset by placing all points inside a bin into the (paretro) midpoints ({\it midpoint resampling}).
\end{enumerate}

From a theoretical perspective the probabilistic strategy is the most natural, and gives generally
good results for data sampled from continues distribution (with density).  Midpoint resampling is
the simplest and gives a low a-priori error bounds independent of the distribution of $X$.  Fair
resampling mitigates between both approaches, and gives lower expected errors on a wide variety of
datasets and while avoiding large relative errors in case of single sample bins.

The current implementation of circllhist uses midpoint resampling to define sum, mean, stddev and
moment estimation. For percentile calculations fair resampling is used.

\begin{proposition}
  Let $X=(x_1,\dots,x_n), n>0$ be a dataset with values in $\IR_{>0}$, let $H$ be the circllhist summary of $X$.
  For each bin $Bin[i]$, let $c_i \in Bin[i]$ be the paretro midpoint. Let
  \begin{align*}
    count[H] = \sum_i H(i), \quad sum[H] = \sum_{i\in I} H(i) \cdot c_i, \qtext{and} mean[H] = sum[H] / count[H].
  \end{align*}
  then
  \begin{enumerate}
  \item $count[H] = count(X) = n$.
  \item The relative error $|sum[H] - sum(X)| / sum(X)$ is smaller than $1/21 \approx 4.76\%$.
  \item The relative error $|mean[H] - mean(X)| / mean(X)$ is smaller than $1/21 \approx 4.76\%$.
  \end{enumerate}
\end{proposition}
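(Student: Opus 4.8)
The plan is to dispatch the three claims in order, reducing each later one to the work already done. The key algebraic move, used throughout, is to rewrite the histogram-based sum as a per-point resampling: since every $x_j$ lies in exactly one bin $Bin[bin(x_j)]$, grouping the samples by bin gives $sum[H] = \sum_i H(i)\, c_i = \sum_j c_{bin(x_j)}$, so that $sum[H]$ is nothing but the sum obtained from $X$ after replacing each sample $x_j$ by the paretro midpoint $c_{bin(x_j)}$ of its bin. For claim (1), I would observe that the circllhist bins are disjoint and cover the domain (Definition \ref{def:circllhist}), so each $x_j \in \IR_{>0}$ falls into exactly one bin; summing the bin counts then counts each sample exactly once, $count[H] = \sum_i H(i) = \sum_i \#\{j : x_j \in Bin[i]\} = n$.

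For claim (2), I would apply the per-point rewriting to get $sum[H] - sum(X) = \sum_j \bigl(c_{bin(x_j)} - x_j\bigr)$. Since $X$ takes values in $\IR_{>0}$, every sample lands in a positive bin $Bin[+1,e,d]$ with $d \in \{10,\dots,99\}$, which is precisely the range covered by Proposition \ref{prop:21}; that proposition supplies the per-point relative bound $|c_{bin(x_j)} - x_j| \leq x_j/21$. Combining the triangle inequality with this termwise estimate,
\begin{align*}
  |sum[H] - sum(X)| \leq \sum_j |c_{bin(x_j)} - x_j| \leq \frac{1}{21}\sum_j x_j = \frac{1}{21}\, sum(X),
\end{align*}
and dividing by $sum(X) > 0$ (legitimate because $n > 0$ and all values are positive) yields the claimed relative bound.

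Claim (3) then follows from the two preceding parts with no further estimation. Because $count[H] = n$ by claim (1), both $mean[H] = sum[H]/n$ and $mean(X) = sum(X)/n$ carry the same denominator $n$, so this factor cancels in the ratio and $|mean[H] - mean(X)|/mean(X) = |sum[H] - sum(X)|/sum(X)$, which is bounded by claim (2).

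I would not expect a serious obstacle here; the one point needing care is that relative errors do \emph{not} combine additively in general, so the argument must pass through the absolute differences $c_{bin(x_j)} - x_j$ and invoke the relative bound only after the triangle inequality. It is exactly the positivity of the $x_j$ that lets the termwise estimate $|c_{bin(x_j)} - x_j| \leq x_j/21$ reassemble into a bound relative to $sum(X)$. A minor subtlety worth flagging is strictness: the value $1/21$ is actually \emph{attained} when every sample sits on the left endpoint of a $d=10$ bin (e.g. all $x_j$ equal to a power of $10$), since there each difference is positive and relatively equal to $1/21$, making the triangle inequality tight; so the phrase ``smaller than $1/21$'' should be read as ``at most $1/21$'' unless that boundary configuration is excluded.
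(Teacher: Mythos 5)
Your proposal is correct and follows essentially the same route as the paper's proof: exactness of $count[H]$ from the bin partition, the regrouped difference $sum[H]-sum(X)=\sum_{i}\sum_{j,\,x_j\in Bin[i]}(c_i-x_j)$ (your per-point indexing is the same decomposition), the triangle inequality combined with the per-sample bound from Proposition \ref{prop:21}, and cancellation of the common factor $n$ for the mean. Your closing remark on strictness is a fair catch the paper glosses over --- since the bins are half-open, a sample at the left endpoint of a $d=10$ bin attains relative error exactly $1/21$, so the stated bound should indeed be read as ``at most $1/21$.''
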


\begin{proof}
  The first claim follows immediately from the definition of $count[H]$, and $H(X)$.
  For the second claim, we have
  \begin{align*}
    sum[H] - sum(X) = \sum_{i\in I} H(i) \cdot c_i - \sum_j x_j
    = \sum_{i\in I} ( H(i) \cdot c_i - \sum_{j, x_j \in Bin[i]} x_j)
    = \sum_{i\in I} \sum_{j, x_j \in Bin[i]} (c_i - x_j)
  \end{align*}
  And hence,
  \begin{align*}
    |sum[H] - sum(X)| \leq \sum_{i,j, x_j \in Bin[i]} |c_i - x_j| \leq \sum_{j} \frac{1}{21} |x_j| = \frac{1}{21} \cdot sum(X),
  \end{align*}
  where we used \ref{prop:21} in second step.
  This shows the second claim.
  For the third claim follows from the second by extending the fraction with $1/n$.
\end{proof}

\subsection{Quantiles}
\label{sec:quantiles}

Quantiles can be approximated from histograms using any of the strategies outlined at the beginning
of the last section.  The probabilistic strategy has the downside, that for $q$ close to $1$, the
$q$-quantile will always be near the high end of the largest bin with at least one sample in
it. Often times this bin contains only a single sample. In this case the worst case error of the
full bin size can be easily assumed.

Midpoint resampling has the downside, that expected errors for percentiles in the center of the
distribution are much larger than needed. For densely populated bins the uniform distribution is
often a good approximation, and can be used to estimate percentiles with high precision in typical
cases.

Fair resampling offers a welcome middle route. Instead of placing all samples at the (paretro)
midpoint, or smoothing them out with a uniform distribution, we place the samples at equally
spaced position within each bin. In this way, densely populated bins will be approximated by a near
uniform distribution and bins with only a single sample will be replaced by a single sample at the
midpoint, reducing the worst-case error to half the bin size.

\begin{definition}
  Given a histogram $H$ on a binning $Bin[i], i \in I$ we define the fair resampling $X$ of $H$ as follows.
  For each bin $Bin[i]$, with boundaries $a,b$ and count $H(i)=n$, we consider the points
  \begin{align*}
    x_{i,k} = a + \frac{k}{n+1} (b-a) \qtext{for} k=1,\dots,n
  \end{align*}
  and let $\hat{X} = (x_{i,k} \,|\, i \in I, k = 1\dots,H(i))$. This is a dataset with $count[H]$ points.

  Similarly, we define the midpoint resampling of $H$, with $H(i)$ samples at the midpoint of $Bin[i]$,
  and the paretro midpoint resampling of $H$, with $H(i)$ samples at the paretro midpoint of $Bin[i]$.
\end{definition}

We now want to define the quantiles of a histogram as quantiles of the fair resampling of $H$.
Before we can do so we first need to clarify the quantile definition for datasets.
While there is only a single established quantile definition for probability distributions, there
are multiple different quantile definitions for datasets found in the wild.
A comprehensive list was compiled by Hyndman-Fan in 1996 \cite{HF1996}.
For our discussion we will need type-1/7 quantiles from the Hyndman-Fan list,
as well as two other quantile functions, used by data-structures covered in our evaluation.

\begin{definition} \label{def:quantiles}
  Given a dataset $X=(x_1,\dots,x_n), n>0$ of real numbers, and a number $q \in [0,1]$.
  Let $x_{(1)} \leq x_{(2)} \leq \dots \leq x_{(n)}$ be the ordered version of $X$.
  We define the minimal type-1 q-quantile as
  \begin{align*}
    Q^1_0(X) = x_{(1)} \qtext{and}  Q^1_q(X) = x_{(\ceil{q \cdot n})} \qtext{for} q > 0.
  \end{align*}
  We define the minimal type-7 quantile as
  \begin{align*}
    Q^{7}_q(X) = x_{(\floor{q \cdot (n - 1)} + 1)}.
  \end{align*}
  The interpolated type-7 quantile is given by:
  \begin{align*}
    Q^{7i}_q(X) = (1 - \gamma) \cdot x_{(\floor{q \cdot (n - 1)} + 1)} + \gamma \cdot x_{(\ceil{q \cdot (n - 1)} + 1)},
  \end{align*}
  where the interpolation factor $\gamma \in [0,1]$ is given by $\gamma = q(n-1) - \floor{q(n-1)}$.

  The type-hdr quantile $Q^{hdr}_q(X)$ is given by $x_{(1)}$ if $qn \leq \frac{1}{2}$, by $x_{(n)}$
  if $qn \geq n - \frac{1}{2}$, and otherwise, by
  \begin{align*}
    Q^{hdr}_q(X) &= x_{(\floor{q n - \frac{1}{2}})}.
  \end{align*}

  Similarly, the type-tdigest quantile is given by $x_{(1)}$ if $qn \leq \frac{1}{2}$, by $x_{(n)}$
  if $qn \geq n - \frac{1}{2}$, and otherwise, by
  \begin{align*}
    Q^{tdigest}_q(X) &= (1 - \gamma) \cdot x_{(\floor{q n - \frac{1}{2}})} + \gamma \cdot x_{(\ceil{q n - \frac{1}{2}})}
  \end{align*}
  where $\gamma = q n - \frac{1}{2} - \floor{q n - \frac{1}{2}}$.
\end{definition}

\begin{figure}
  \begin{subfigure}{0.5\textwidth}
    \includegraphics[width=\textwidth]{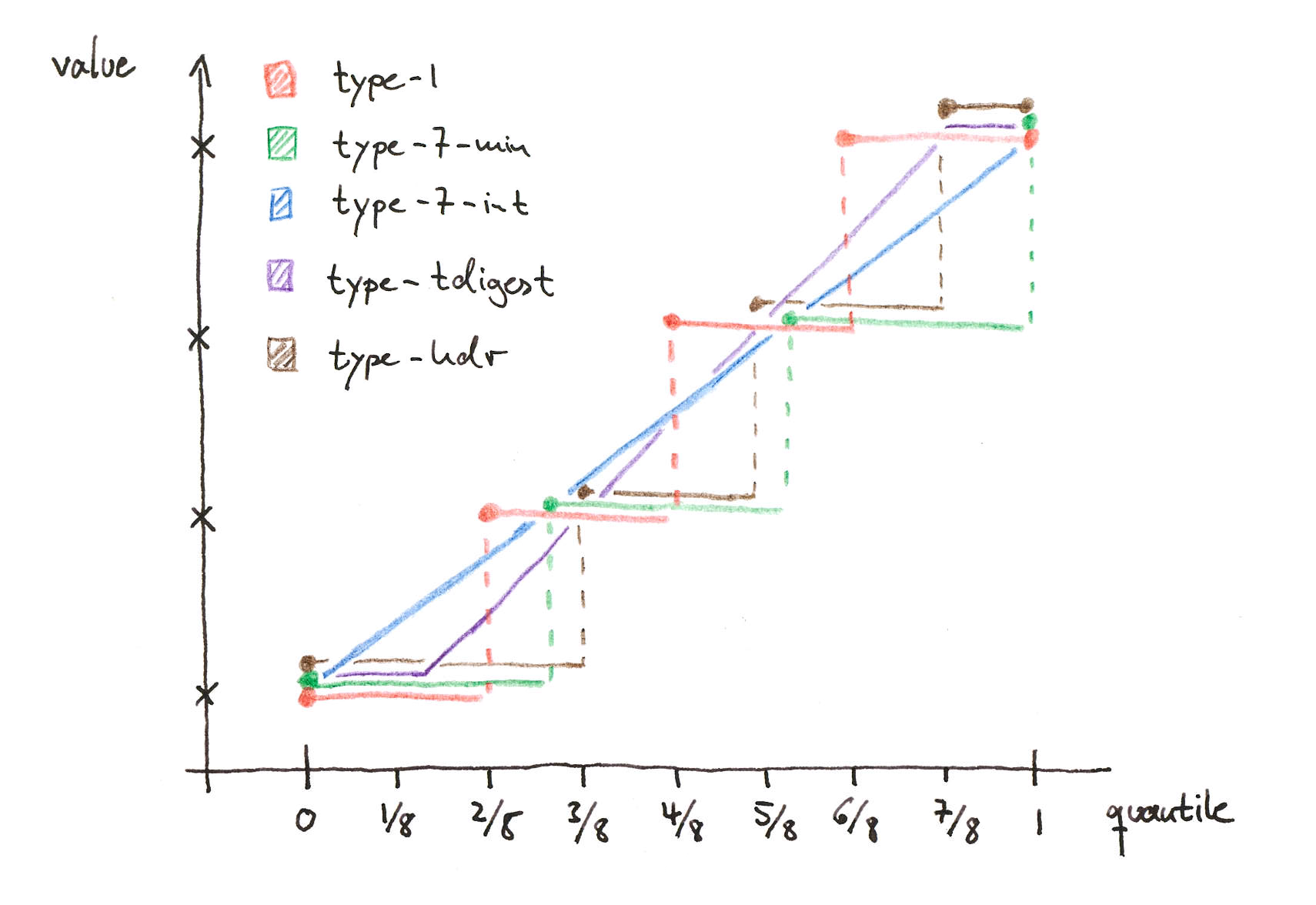}
    \caption{Theoretical Quantile Functions}
    \label{fig:tqf}
  \end{subfigure}
  \begin{subfigure}{0.5 \textwidth}
    \includegraphics[width=\textwidth]{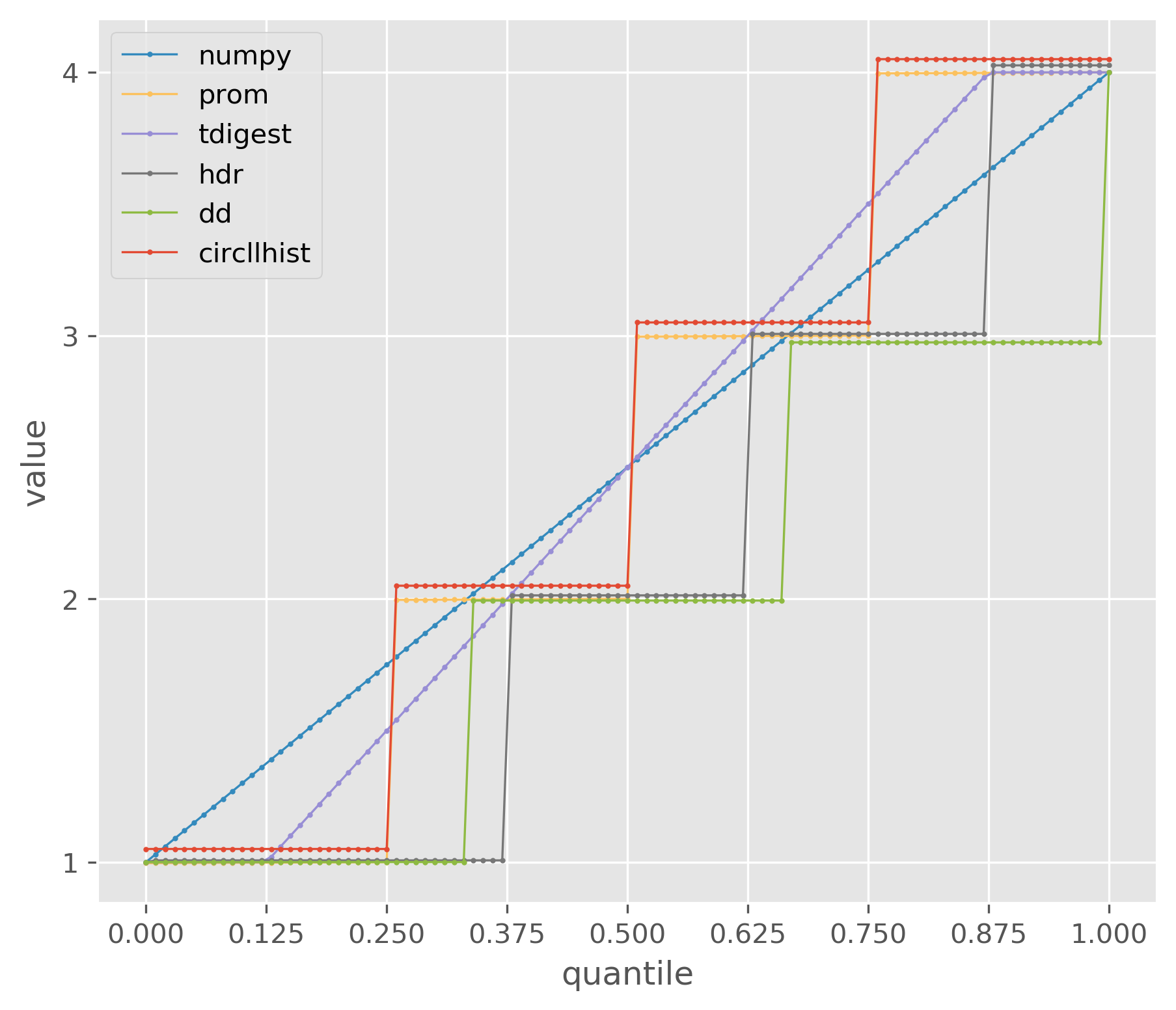}
    \caption{Computed Quantile Functions}
    \label{fig:pqf}
  \end{subfigure}
  \caption{Quantile Function Comparison for the Dataset $[1,2,3,4]$}
  \label{fig:qf}
\end{figure}

Figure \ref{fig:tqf} shows a plot of these quantile definitions as functions on $q$.  From a
theoretical perspective type-1 quantiles are the most natural, since they correspond the
probabilistic quantiles for the empirical distribution function of the dataset $X$.  Also they allow
to formulate precise statements about lower counts (cf. Proposition \ref{prop:count} below).  For
this reason the circllhist implementation uses minimal type-1 quantiles on the fair resampling of
the histogram as for quantile calculations.

Somewhat surprisingly, type-7 quantiles are most commonly found in software packages.
E.g. numpy \cite{numpy} computes interpolated type-7 quantiles by default.
Also DDSketch \cite{dd} and t-digest \cite{tdigest} implement variants of type-7 quantiles.
A detailed comparison between type-1 and type-7 quantiles can be found at \cite{HH19}.
The t-digest and the HDR Histograms use custom quantile functions that were not covered in Hyndman-Fan.
We call them ``type-hdr'' and ``type-tdigest'' here.
Figure \ref{fig:pqf} shows a plot of the quantile functions, produced by all relevant implementations.

\begin{proposition}\label{prop:count}
  Let $X$ be a dataset of length $n > 1$, $y \in \IR$ a threshold value and $0 < q \leq 1$.
  Then the following statements are equivalent:
  \begin{itemize}
  \item[(1)] There are at least $q n$ datapoints $x$ in $X$ with $x \leq y$.
  \item[(2)] We have $Q^1_q(X) \leq y$.
  \end{itemize}
\end{proposition}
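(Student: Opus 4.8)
The plan is to reduce both statements to a single integer inequality involving the counting function $N(y) = \#\{i : x_i \le y\}$. First I would set $k = \ceil{q n}$, the index appearing in the definition of the minimal type-1 quantile for $q > 0$, and record that since $0 < q \le 1$ we have $0 < qn \le n$, so $1 \le k \le n$ and $x_{(k)}$ is a legitimate order statistic. The whole argument then consists of showing that each of (1) and (2) is equivalent to the clean condition $N(y) \ge k$.

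For statement (1), the key observation is that $N(y)$ is an integer, so the real-valued inequality $N(y) \ge qn$ holds if and only if $N(y) \ge \ceil{qn} = k$, by the defining property of the ceiling as the least integer not smaller than $qn$. For statement (2), I would invoke the standard characterization of order statistics: $x_{(k)} \le y$ is equivalent to $N(y) \ge k$. The forward direction is immediate, since $x_{(k)} \le y$ forces $x_{(1)} \le \cdots \le x_{(k)} \le y$, exhibiting at least $k$ data points below the threshold. For the converse, if at least $k$ points satisfy $x_i \le y$, then the $k$ smallest values of $X$ are all among them, so in particular $x_{(k)} \le y$. Chaining the two equivalences gives (1) $\iff N(y) \ge k \iff$ (2).

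This proof is short, and I do not expect a genuine obstacle so much as one point requiring care: the integrality step, where the ceiling built into the minimal type-1 quantile is exactly what collapses the real threshold $qn$ down to the integer $k$. It is worth flagging that this is precisely where the choice of \emph{minimal type-1} quantiles matters — an interpolated variant would not produce the clean count-based equivalence, which is the reason the circllhist implementation adopts this quantile convention.
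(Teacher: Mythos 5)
Your proof is correct and follows essentially the same route as the paper: both arguments rest on the integrality step that collapses the threshold $qn$ to $k=\ceil{qn}$, combined with the order-statistic fact that $x_{(k)} \leq y$ exactly when at least $k$ data points lie at or below $y$. Your reorganization through the counting function $N(y)$ merely chains as equivalences what the paper proves as two separate implications, and your closing remark about why the \emph{minimal type-1} convention is essential matches the paper's stated motivation.
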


\begin{proof}
  To show (1) implies (2), assume that there are at least $q n$ datapoints below $y$.
  Since the number of datapoints is always an integer, the same holds for $r = \ceil{q n}$.
  This implies that the ordered values $x_{(1)}, \dots, x_{(r)}$ all lie below $y$,
  and hence $x_{(r)} = Q^1_q(X) \leq y$.

  To show (1) implies (2), assume that $Q^1_q(X) \leq y$ then the $r = \ceil{n q}$ values $x_{(1)},
  \dots, x_{(r)} = Q^1_q(X)$ all lie below $y$. This shows there are at least $n q$
  datapoints lie below $y$.
\end{proof}

\begin{proposition}
  Let $X$ a dataset with $n > 0$ positive values and $0 < q \leq 1$. Let $H$ be the circllhist summary of $X$.
  We denote by $\hat{X}^f(H)$ the fair resampling, and by $\hat{X}^p(H)$ the paretro-midpoint resampling of $H$.
  \begin{enumerate}
  \item The relative error of the estimated type-1 quantile is less than $1/21 \approx 4.76\%$ for
    the paretro-resampling:
    \begin{align*}
      |Q^1_q(X) - Q^1_q(\hat{X}^p(H))| \leq \frac{1}{21} Q^1_q(X)
    \end{align*}
  \item The relative error of the estimated type-1 quantile is less than $1/10 = 10\%$ for
    the fair-resampling:
    \begin{align*}
      |Q^1_q(X) - Q^1_q(\hat{X}^f(H))| \leq \frac{1}{10} Q^1_q(X)
    \end{align*}
    Furthermore, in case the quantile value falls into a bin with a single sample (typical for
    outliers), the maximal relative error is $5\%$.
  \end{enumerate}
\end{proposition}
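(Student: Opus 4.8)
The plan is to reduce both parts to a single structural observation: since resampling preserves the per-bin counts, the true type-1 quantile and its resampled estimate always land in the \emph{same} circllhist bin, after which the error is controlled purely by the geometry of that bin.

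First I would set up the common framework. Write $r = \ceil{q n}$, so that $Q^1_q(X) = x_{(r)}$ by Definition \ref{def:quantiles}, and note $1 \leq r \leq n$ since $0 < q \leq 1$. Both resamplings produce datasets of the same length $n$ as $X$, so the rank $r$ is meaningful for all three. Order the circllhist bins by their position on the positive real axis, say $B_1, B_2, \dots$, and let $m_k$ be the number of samples of $X$ lying in $B_k$, with cumulative counts $M_k = m_1 + \dots + m_k$. Because the bins partition $\IR_{>0}$ into intervals in increasing order, the sorted values of $X$ falling in $B_k$ are exactly $x_{(M_{k-1}+1)}, \dots, x_{(M_k)}$; in particular $x_{(r)} \in B_k$ for the unique $k$ with $M_{k-1} < r \leq M_k$. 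The key point is that both the fair resampling $\hat{X}^f$ and the paretro resampling $\hat{X}^p$ place exactly $m_k$ points inside $B_k$, so they share the same cumulative counts $M_k$, and hence their rank-$r$ order statistics lie in the very same bin $B_k = [a,b)$ as $x_{(r)}$. This equal-membership statement is the heart of the argument, and I expect it to be the main obstacle: one must verify that the resampled points genuinely fall inside their own bins and that identical bin counts force identical order-statistic membership for \emph{every} rank $r$, not just on average.

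For part (1), inside $B_k$ the paretro resampling places all $m_k$ of its points at the paretro midpoint $c_k$, so $Q^1_q(\hat{X}^p) = c_k$ while $Q^1_q(X) = x_{(r)} \in B_k$. The relative distance $rd(c_k, x_{(r)}) = |c_k - x_{(r)}|/x_{(r)}$ is then bounded by the maximal relative distance to the paretro midpoint over all circllhist bins, which is $1/21$ by Proposition \ref{prop:21}. Multiplying through by $x_{(r)} = Q^1_q(X)$ gives the claimed bound.

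For part (2), both $Q^1_q(\hat{X}^f)$ and $Q^1_q(X)$ lie in $B_k = [a,b)$ with $a = d \cdot 10^{e-1}$ and $b = (d+1)\cdot 10^{e-1}$ for some $d \in \{10,\dots,99\}$. Since both points lie in the same interval, $|Q^1_q(\hat{X}^f) - Q^1_q(X)| \leq b - a$ while $Q^1_q(X) \geq a$, so the relative error is at most $(b-a)/a = 1/d \leq 1/10$, yielding the $10\%$ bound. For the single-sample refinement, when $m_k = 1$ the fair resampling definition places the lone point at $a + \tfrac{1}{2}(b-a) = (a+b)/2$, the arithmetic midpoint; the worst case of $|(a+b)/2 - v|/v$ over $v \in [a,b)$ is attained at $v = a$ and equals $(b-a)/(2a) = 1/(2d) \leq 1/20 = 5\%$. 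The only routine work remaining is these two per-bin optimizations in $d$, both minimized at $d = 10$.
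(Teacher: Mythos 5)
Your proposal is correct and takes essentially the same route as the paper's own proof: reduce to the rank $r = \ceil{qn}$, note that the rank-$r$ order statistic of each resampling lies in the same bin as $x_{(r)}$, then invoke Proposition \ref{prop:21} for the paretro case and the per-bin worst-case ratios $(b-a)/a = 1/d \leq 1/10$ and $1/(2d) \leq 1/20 = 5\%$ (both extremal at $d=10$) for the fair case. The only difference is presentational: you justify the same-bin membership explicitly via cumulative bin counts, a step the paper asserts merely ``by construction.''
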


\begin{proof}
  Let $r = \ceil{q n}$, so that $Q^1_q(X) = X_{(r)}$.
  By construction of the paretro resampling, the $r$-th ordered value in the $Q^1_q(\hat{X}^p) = \hat{X}^p_{(r)}$,
  is the paretro midpoint of the bin containing $X_{(r)}$.
  This shows that $|Q^1_q(X) - Q^1_q(\hat{X}^p)| < 1/21 Q^1_q(X)$ by Proposition \ref{prop:21}.

  Similarly for the fair resampling, the $r$-th ordered value in the $Q^1_q(\hat{X}^f) = \hat{X}^f_{(r)}$,
  lies in the bin containing $X_{(r)}$.
  Now we claim, that the maximal relative distance to any point in the same circllhist bin is $10\%$.
  Indeed, the worst case relative difference is assumed for a bin $[10 \cdot 10^e, 11 \cdot 10^e)$ with
  $x = 10 \cdot 10^e$ and $y \ra 11 \cdot 10^e$ so that $|x-y|/x \ra 1/10 = 10\%$.

  In case that there is only a single sample in the bin $[a,b)$ containing $X_{(r)}$, the value
  $\hat{X}^f_{(r)}$ will be at the (arithmetic) midpoint of the bin $m = a+\frac{1}{2}(b-a)$.
  Hence, in this case $|X_{(r)} - \hat{X}^f_{(r)}| / X_{(r)} \leq \frac{1}{2} (b-a)/X_{(r)}$
  This number is maximized for $X_{(r)} = a$, and $a = 10 \cdot 10^e$ in the circllhist binning.
  In which case $\frac{1}{2} (b-a)/a =\frac{1}{2} \frac{1}{10} = 5\%$.
\end{proof}

\begin{example}
  With the notation from the last proposition.  Let $X=(10,10,10, \dots, 10)$ of length $n$, then
  the fair resampling is given by $\hat{X}^f = (10 + \frac{1}{n+1}, \dots, 10 + \frac{n}{n+1})$.
  So $Q^1_1(\hat{X}^f) = 10 + \frac{n}{n+1}$ which converges to 11 for $n \ra \infty$.
  On the other hand $Q^1_1(X) = max(X) = 10$. So the worst case error of 10\% is assumed in
  the asymptotic case.
\end{example}

This proposition shows, that the worst-case relative error for the circllhist is 10\%.  The
theoretical worst-case is only realized in cases were a large number of samples falls at the lower
end of a bin as shown in the last example.  This is very rare to happen.  In practice we usually
see a 5\% worst-case relative error at the tails of the distribution, and high accuracy
quantiles (<\%1 relative error) in the body of the distribution.

\section{The Circllhist Implementation}

The data-structure implemented in \cite{libcircllhist}, is a circllhist in the sense of
Definition~\ref{def:circllhist} with exponent range limited to $-128 \leq e < 127$.
In this way, both exponent $e$ and mantissa $d$ can be represented by 8-bit integers.
The counts $H(i)$ are represented as 64-bit unsigned integers.

The largest representable number of the circllhist is $99 \cdot 10^{127}$.
This number is larger than the age of the universe measured in nano-seconds (13.8 billion years)
The smallest representable positive number is $10 \cdot 10^{-128}$.
This number is smaller than the Plank time measured in years ($5.39 \cdot 10^{-44}$ s).
We have found this value range to be sufficient for all practical purposes.

Within that range the circllhist bins have a relative size of 1\%-10\% of the values contained in
them. Hence, the original values can be reconstructed with a relative error of no more than 10\%.
If the reconstructed values are placed at the paretro midpoint of the bin, the relative reconstruction
error can be reduced to 4.76\% (see Proposition \ref{prop:21}).

The accuracy of approximations of statistical quantities like sums, means and quantiles is commonly
much better than 10\%, since the individual reconstruction errors cancel across the dataset.
In the case of quantiles, we will see this in the evaluation below.

The counts of samples above or below a threshold, can be reconstructed accurately for threshold
that fall onto bin boundaries. In the case of the circllhist, those lie at decimal floating point
numbers with two digits of precision (e.g. 0.23, 1.5, 110). In practice, those are the numbers
that humans choose, if they have to come up with thresholds.

A circllhist is internally represented as a sparse list of (bin, count) pairs.  If a bin has a count
of 0, then it can be skipped in the representation.  The serialized form of the histogram will only
contain bins with non-zero counts.  The theoretical maximum of used bins is $2 \times 256 \times 90
+ 1 = 46081$ (for sign, exponent, mantissa and zero bucket) which makes for a maximal size of 461kb.

In practice we have never seen histogram data structures getting close to this size.  Even in
extreme cases, when capturing billions of samples from a nano-second to minute scale, the number of
allocated bins never exceeded 1000, and the total size stayed below 10kb.
Typical histograms in our system, have anywhere from 0-200 allocated bins and occupy <2kb before
compression.

A notable design goal of the circllhist is it's use for measurements inside the kernel or
low-powered embedded devices.  In those environments floating point arithmetic is not available, and
insertion performance is particularly critical. For these purposes the circllhist provides a highly
optimized insertion function that avoids floating point arithmetic entirely (cf. Proposition
\ref{prop:rec}).

The C implementation of libcircllhist includes a number of performance optimizations.  It comes with
an optional index structure, that avoids iteration over bins when retrieving and inserting data and
uses static branch annotations to aid CPU branch predictions.  With these optimizations we can get
raw insertion latencies down to $\sim 10ns$ for integer, and $\sim 80ns$ for double
values.\footnote{ These latencies were measured in a tight C loop with the provided
\texttt{test/histogram\_perf.c} script, on a 2Ghz Intel Xeon CPU.  The evaluation in
section~\ref{sec:eval} is Python based and uses different iteration counts and data. }

Implementations of the circllhist are available for a variety of languages including:
\begin{itemize}
\item C/C++ -- \url{https://github.com/circonus-labs/libcircllhist}
\item Python -- \url{https://github.com/circonus-labs/libcircllhist/tree/master/src/python}
\item Go -- \url{https://github.com/circonus-labs/circonusllhist}
\item Lua -- \url{https://github.com/circonus-labs/libcircllhist/tree/master/src/lua}
\item JavaScript -- \url{https://github.com/circonus-labs/circllhist.js}
\item C\# .NET -- \url{https://github.com/circonus-labs/netcircllhist}.
\end{itemize}

\section{Evaluation}
\label{sec:eval}

\begin{figure}
   \includegraphics[width=\textwidth/2]{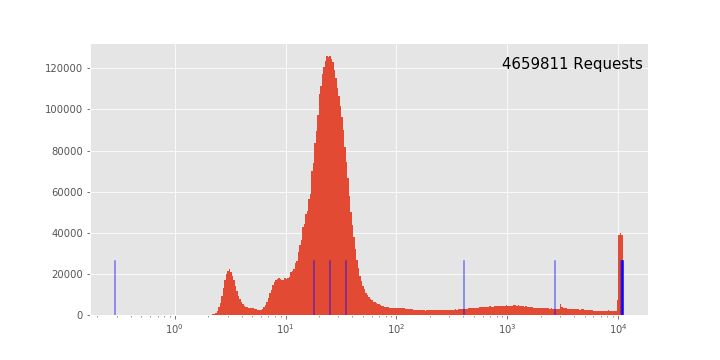}
   \includegraphics[width=\textwidth/2]{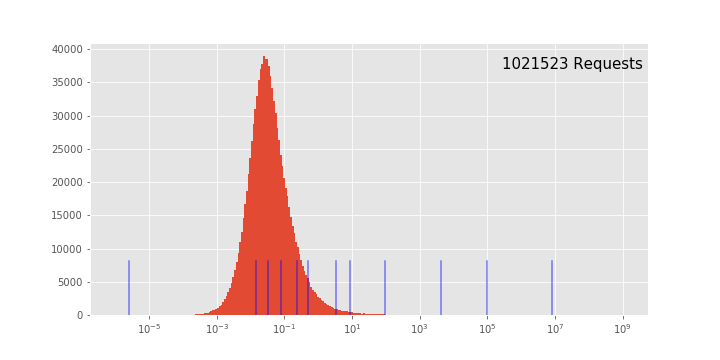}
   \caption{Total Distribution of the 'API Latency' and 'Simulated Latencies' Datasets, with quantile markers.}
   \label{fig:ds}
\end{figure}

In this section we present a numerical comparison between the quantile aggregation methods described
in Section \ref{sec:rw}. We will evaluate the precision and performance of quantile calculations,
performance of insertion and merge operations, as well as the size of the resulting data-structures.

The evaluation proceeds in three phases: insertion, merge and quantile calculation. In the insertion
phase, raw data is inserted the data-structures. Each data-set is split into a number of batches. We
create individual data-structures for each batch. In the merge phase, data-structures created for
each batch are merged into a single one.  In the quantile phase, we perform quantile calculations on
the merged data-structure.

The evaluation was performed using a set of Jupyter notebooks, which are available at
\begin{center}
  \url{https://github.com/circonus-labs/circllhist-paper}.
\end{center}
This repository contains datasets and source code used to generate the exact tables and graphics we
are using this document.  We also provide docker images and instructions, that should aid the
reproduction of the evaluation results on other machines.  We are open to extension and improvements
to the evaluation setup. Please contact us via email or open a pull request, if you identified flaws
or found improvements.

\subsection{Methods}

The following data-aggregation methods are considered for evaluation.

\begin{itemize}
\setlength{\itemindent}{2em}
\item[exact] Exact quantile computation based on numpy arrays \cite{numpy}.
\item[prom] Quantile estimation based on Prometheus Histograms \cite{prom}.
\item[hdr] The HDR Hisotgram data-structure introduced in \cite{hdr}.
\item[dd] The DDSketch data-structure introduced in \cite{dd}.
\item[t-digest] The t-digest data-structure introduced in \cite{tdigest}.
\item[circllhist] The circllhist data-structure described in this document.
\end{itemize}

The Prometheus monitoring system provides a histogram data-type that consists of a list of ``less
than'' metrics, which count how many samples were inserted that are below manually configured
threshold values. This datum of a Prometheus histogram is equivalent to a histogram in the sense of
Definition \ref{def:hist}, with bin-boundaries at the configured thresholds.  Prometheus Histograms
are included here for their wide use in practice.  The method itself is not really comparable, since
it the results are highly dependent on the number and location of the chosen thresholds. For the
evaluation, we follow the recommended practice of using a total ten bins at locations which cover
the whole data range with emphasis on the likely quantile locations.  We use a hand-written Python
translation of the original quantile functions written in go.

The HDR Histogram data structure is a log-linear Histogram in the sense of Definition \ref{def:ll}.
It uses a base of 2, and a configurable range and precision.  It's notable that the exposed API
let's the user specify decimal precision.  The internal precision is then chosen in such a way that
the resulting base-2 bins are smaller than the specified base-10 accuracy.  Using base-2 arithmetic
has the advantage of allowing bit-wise manipulation of floating points numbers to determine the bin
location. For our evaluation used the Python
implementation\footnote{\url{https://github.com/HdrHistogram/HdrHistogram_py}}, and configured the
HDR Histograms to cover the same range of the circllhist ($10^{-128}\dots10^{+127}$) with two digits
of decimal precision.

The DDSketch is a histogram for the logarithmic binning introduced in Example \ref{example:log}.
It allows arbitrary positive real numbers as basis, to configure the desired precision.
We use the Python implementation\footnote{\url{github.com/DataDog/sketches-py}}, with the default
precision of 1\% (corresponding to $b=101/99$).

The t-digest is the only evaluated method, that is not internally using a histogram data structure.
Instead the inserted data is aggregated into clusters of adaptive size.  The sizes are chosen in
such a way that, high resolution data is available at the tails of the distribution.  We use the
original Java implementation\footnote{\url{github.com/tdunning/t-digest}} called from python
using pyjnius\footnote{\url{https://github.com/kivy/pyjnius}}. The digests are configured with a
compression parameter of 100, which is described to be a ``common value for normal uses'' in
the source code.

For the circllhist we used the Python binding\footnote{\url{github.com/circonus-labs/libcircllhist}}.
It does not allow any configuration of bin sizes or accuracy.

\subsection{Datasets}

For the evaluation we choose three different datasets: ``Uniform Distribution'', ``API Latencies'' and ``Simulated API Latencies''
each containing over 1 million samples, split into more than 1000 batches.

For the ``Uniform Distribution'' dataset, random samples are generated for a uniform distribution on $[10,100]$.
These are split in to 1000 batches each containing 100 values, for a total of 100.000 samples.

The ``API Latencies'' dataset contains data collected at one of our internal APIs collected in 10
minute batches over the course of 6 weeks.  The data is spread between 0.29 and 11.000 (ms).  There
is a cutoff value around 11sec, that was caused by timeout logic getting triggered.  This results in
the higher quantile values being close together.  It consists of a total of 4.65 million samples in
6048 batches.

For the ``Simulated Latencies'' dataset, we generate a total of 1000 batches of randomized sizes
with randomized data.  The size of the batches follows a geometric distribution.  The samples
themselves are generated as a randomly displaced and scaled paretro distribution.  The data is
spread between on between $10^{-5}$ and $10^{10}$ with an extremely long tail.  The dataset contains
a total of close to 1.000.000 samples.

Figure \ref{fig:ds} contains a histogram visualization of the total distribution of these datasets.

\subsection{Size}

\begin{figure}
    \begin{subfigure}{0.33\textwidth}
      \includegraphics[width=\textwidth]{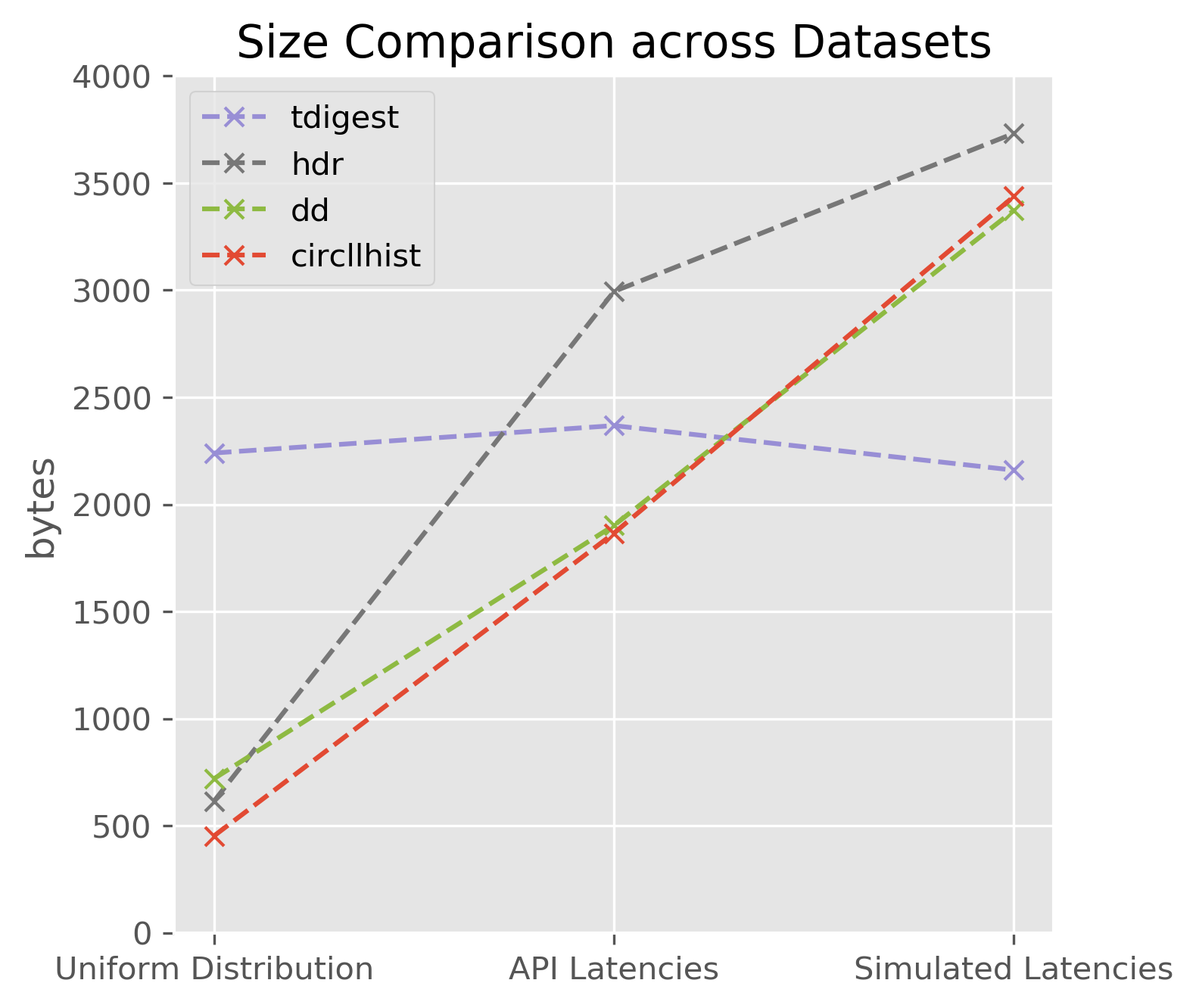}
      \caption{Size Comparison}
      \label{fig:size}
    \end{subfigure}
    \begin{subfigure}{0.66\textwidth}
      \begin{tabular}{lrrr}
\toprule
{} &  Uniform Distribution &  API Latencies &  Simulated Latencies \\
\midrule
exact      &                800000 &       37278488 &              8172184 \\
prom       &                    88 &             88 &                   88 \\
tdigest    &                  2240 &           2368 &                 2160 \\
hdr        &                   615 &           2994 &                 3732 \\
dd         &                   720 &           1902 &                 3373 \\
circllhist &                   453 &           1866 &                 3438 \\
\bottomrule
\end{tabular}

      \caption{Tabluated sizes in bytes}
      \label{fig:tsize}
    \end{subfigure}
    \caption{Size Comparison}
\end{figure}

Figure \ref{fig:size} and Table \ref{fig:tsize} show the sizes of the data structures after all samples
from the respective datasets have been inserted and the data structures have been merged.

With our choice of parameters, the size of the HDR Histogram, DDSketch, circllhist follow each other
quite closely. Note that the size increases with the spread of the data, not with the size of the
inserted samples.

It's important to note, that we compare the sizes of the serialized versions of the data-structures,
as they would be consumed on disk or on the network, not the in-memory size (as this is harder to
estimate). The in-memory size might be larger than the serialized size. This is particularly true
for the HDR Histogram with pre-allocates all bins in memory, and skips empty bins for the
serialization.

The Prometheus histogram stores a count for each of the ten configured threshold value, as well as
the total count (infinite bin), and hence consumes exactly 88 bytes.

The size of the t-digest is relatively constant across all three data-sets, which reflects the fact
that the number of clusters is kept constant when additional data is inserted.

\subsection{Performance}

\begin{figure}
    \begin{subfigure}{\textwidth}
      \includegraphics[width=\textwidth/3]{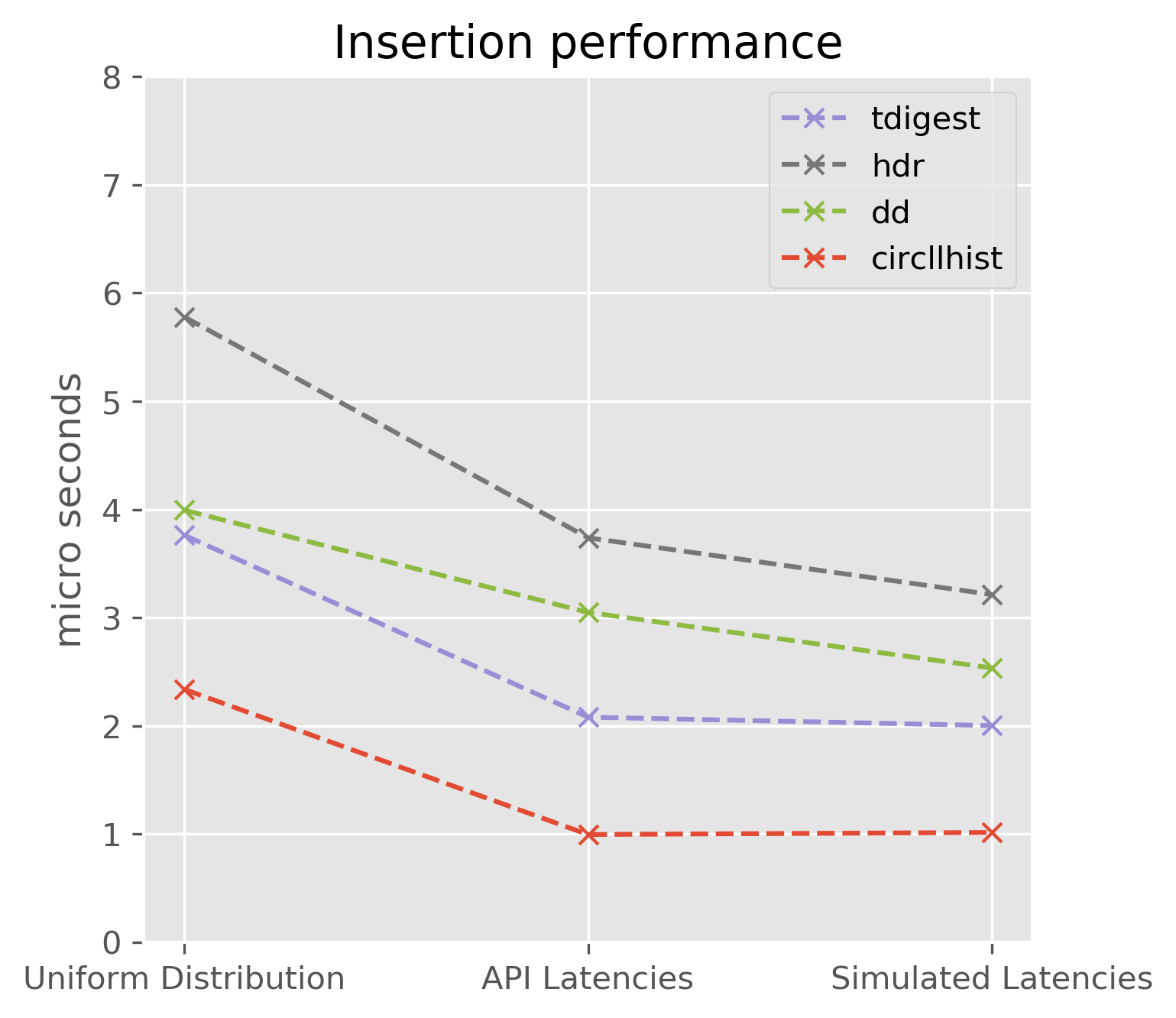}
      \includegraphics[width=\textwidth/3]{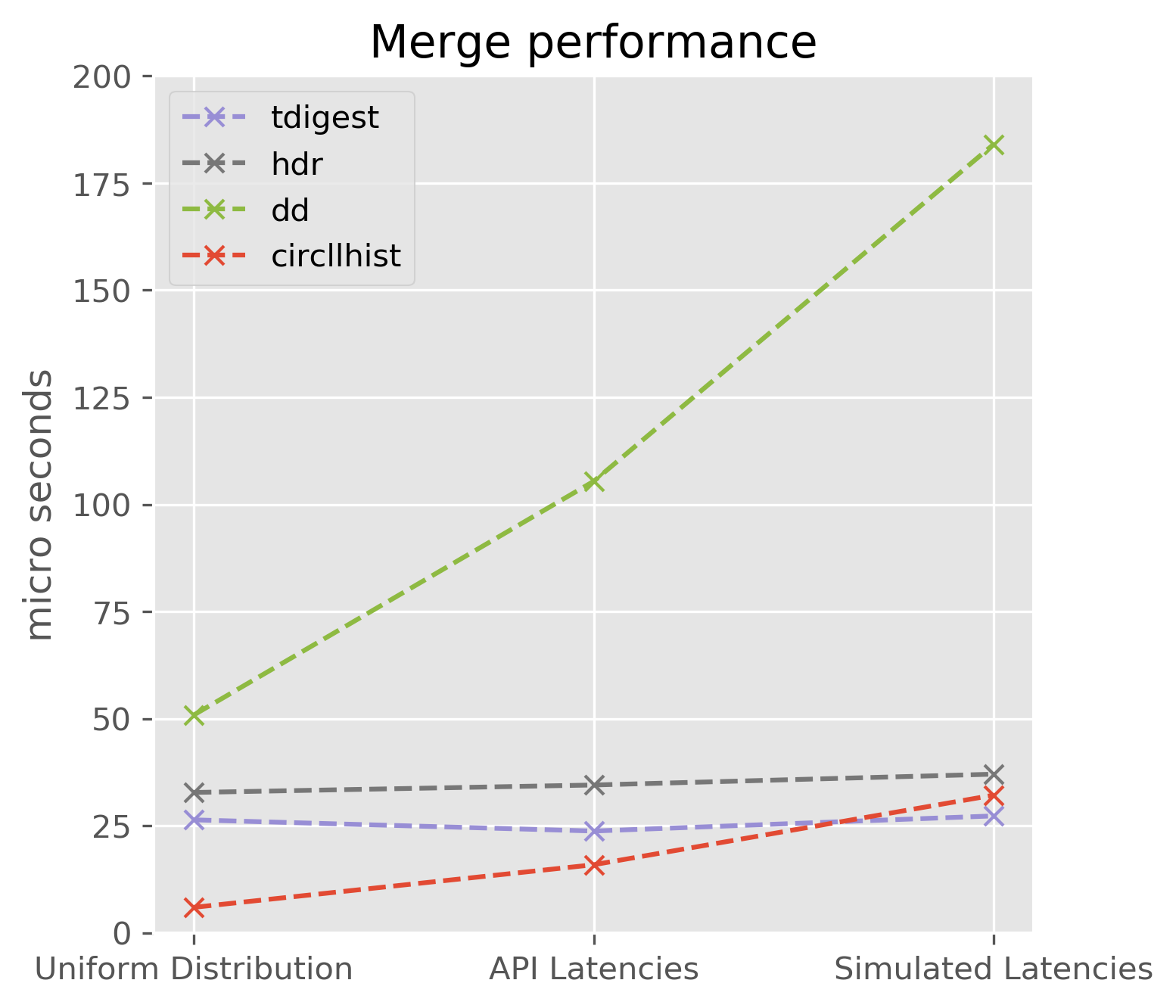}
      \includegraphics[width=\textwidth/3]{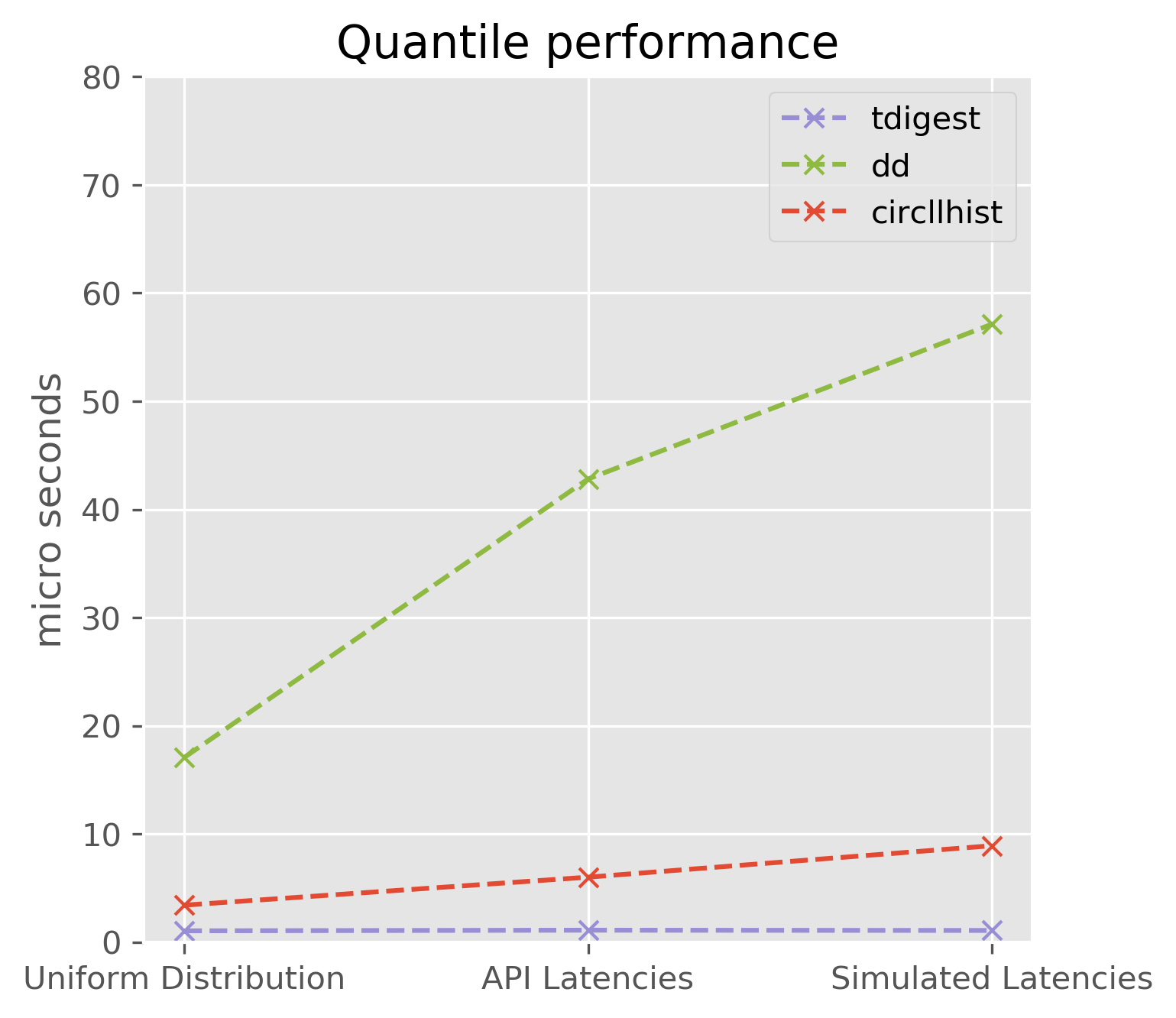}
      \caption{Performance Comparison}
      \label{fig:perf}
    \end{subfigure}
    \begin{subfigure}{\textwidth}
      \centering
      \begin{tabular}{lrrrrrrrrr}
\toprule
Phase & \multicolumn{3}{l}{Insertion} & \multicolumn{3}{l}{Merge} & \multicolumn{3}{l}{Quantile} \\
Dataset &  Unif. D. & API L. & Sim. L. & Unif. D. & API L. & Sim. L. & Unif. D. &  API L. & Sim. L. \\
\midrule
exact      &       1.3 &    0.0 &     0.1 &     33.0 & 3351.9 &   472.6 &    825.1 & 25199.5 &  8789.9 \\
prom       &       8.9 &    7.2 &     6.6 &      0.9 &    0.9 &     0.9 &     10.0 &     8.6 &    10.2 \\
tdigest    &       3.8 &    2.1 &     2.0 &     26.4 &   23.8 &    27.3 &      1.1 &     1.1 &     1.1 \\
hdr        &       5.8 &    3.7 &     3.2 &     32.8 &   34.5 &    37.1 &   1384.3 &  1981.2 &  1604.7 \\
dd         &       4.0 &    3.0 &     2.5 &     50.9 &  105.4 &   184.0 &     17.1 &    42.8 &    57.1 \\
circllhist &       2.3 &    1.0 &     1.0 &      6.0 &   15.9 &    32.2 &      3.4 &     6.0 &     8.9 \\
\bottomrule
\end{tabular}

      \caption{Tabulated performance data in usec\protect\footnotemark}
      \label{tab:perf}
    \end{subfigure}
    \caption{Performance Comparison}
\end{figure}

\footnotetext{
  The insertion time is reported as per inserted sample.
  The merge time is the time per merged batch.
  The quantile time is reported per calculated quantile.
}

Figure \ref{fig:perf} and Table \ref{tab:perf} show the measured performance for insertion, merge
and analysis phase for the three considered datasets. As always, performance measurements should be
taken with a grain of salt, since they are heavily influenced by the implementation, configuration
and hardware choices. In our case, we choose to perform all performance measurements in python, with
the most popular python implementations available.  In the case of the t-digest we ran into accuracy
(and performance) problems with the the python version so we used the official Java version via
pyjnius.

The measurements itself were performed with the
timeit\footnote{https://docs.python.org/3/library/timeit.html} package, with 5 consecutive batches
of runs each taking longer than 0.2 seconds.  As recommended, we report the minimal duration that
was attained in any of the runs.

We ran these experiments on a server with Intel(R) Xeon(R) D-1540 CPU and 128GB RAM running
Linux 5.3.13 and python 3.7.3.

With this configuration, the circllhist was consistently the fastest method when it comes to
insertion and merge.  For the quantile calculations the t-digest is very efficient, followed by
circllhist and DDSketch. Quantile calculations for the HDR Histogram take a lot longer, which
probably indicates optimization potential.

\subsection{Accuracy}

As we have seen in section \ref{sec:quantiles}, there are a number of different quantile definitions
circulating in the wild. When evaluating quantile accuracy, we have to make sure we are comparing
the computed numbers to the theoretical quantiles the methods are approximating.  With the notation
of Definition~\ref{def:quantiles}, the circllhist and Prometheus approximate type-1 quantiles,
DDSketch approximates minimal type-7 quantiles.
t-digest and HDR Histograms approximate custom quantile functions, that we called ``type-tdigest''
and ``type-hdr'' in Definition~\ref{def:quantiles}.

In all cases, the following quantile values were considered:
\begin{align*}
 0, 0.25, 0.5, 0.75, 0.9, 0.95, 0.99, 0.995, 0.999, 0.9999, 0.99999, 1.
\end{align*}
For each value, the quantile is computed once with the evaluated data-structure and once with the
respective theoretical function on the raw data. The relative difference between those value is
reported in Figure \ref{fig:acc} and Table \ref{tab:acc}.

\begin{figure}
  \begin{subfigure}{\textwidth}
    \includegraphics[width=\textwidth]{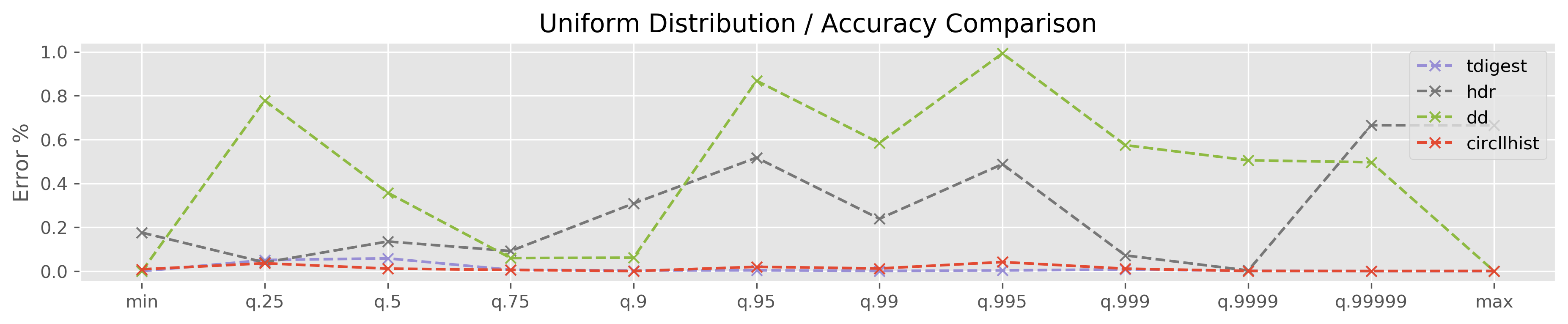}
    \includegraphics[width=\textwidth]{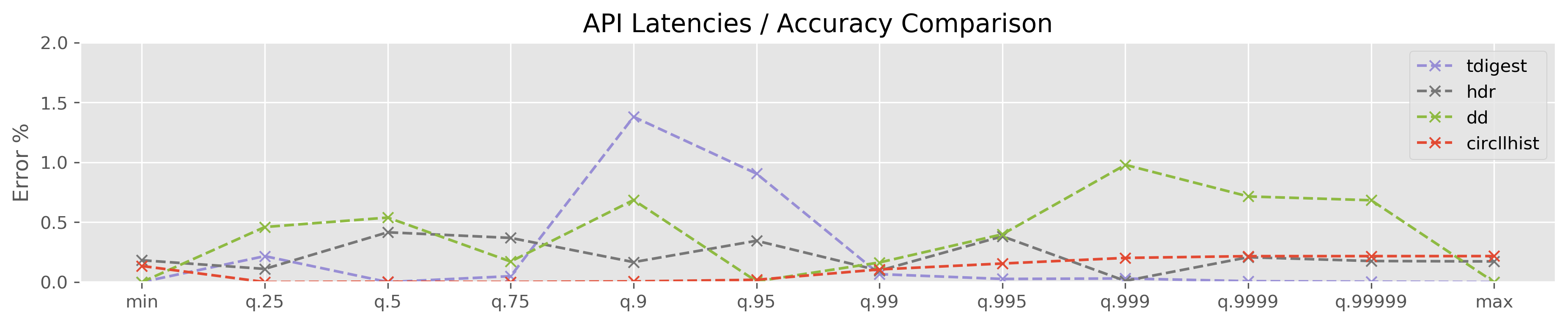}
    \includegraphics[width=\textwidth]{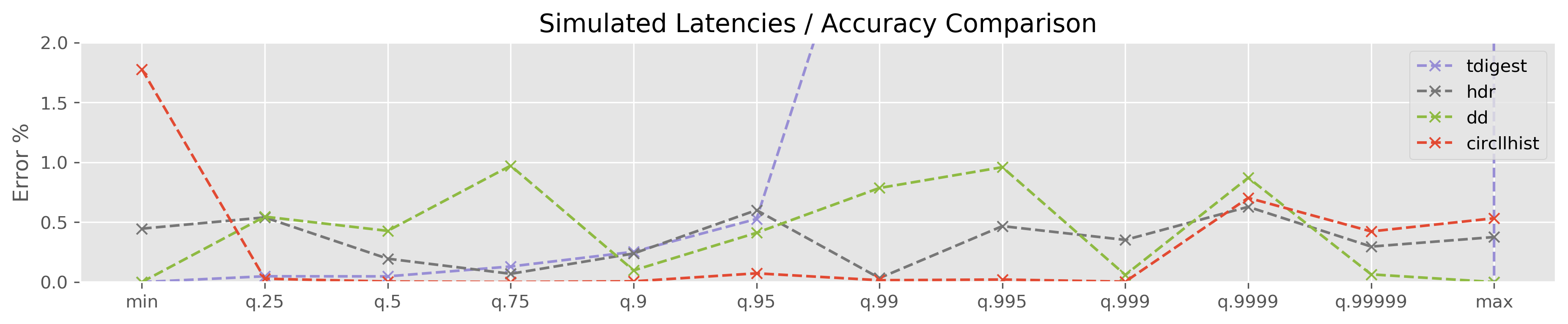}
    \caption{Accuracy Comparison}
    \label{fig:acc}
  \end{subfigure}
  \begin{subfigure}{\textwidth}
    \begin{minipage}{\textwidth}
      \scriptsize
      \begin{tabular}{llrrrrrrrrrrrr}
\toprule
                    & Quantile &      0 &    0.25 &     0.5 &   0.75 &    0.9 &  0.95 &   0.99 &  0.995 &  0.999 &  0.9999 &  0.99999 &     1 \\
\midrule
Uniform Distribution & prom & 100.00 &    0.01 &    0.00 &   0.01 &   0.06 &  0.01 &   0.01 &   0.04 &   0.01 &    0.00 &     0.00 &  0.00 \\
                    & tdigest &   0.00 &    0.05 &    0.06 &   0.01 &   0.00 &  0.00 &   0.00 &   0.00 &   0.01 &    0.00 &     0.00 &  0.00 \\
                    & hdr &   0.18 &    0.04 &    0.14 &   0.09 &   0.31 &  0.52 &   0.24 &   0.49 &   0.07 &    0.00 &     0.67 &  0.67 \\
                    & dd &   0.00 &    0.78 &    0.36 &   0.06 &   0.06 &  0.87 &   0.59 &   0.99 &   0.57 &    0.51 &     0.50 &  0.00 \\
                    & circllhist &   0.01 &    0.04 &    0.01 &   0.01 &   0.00 &  0.02 &   0.01 &   0.04 &   0.01 &    0.00 &     0.00 &  0.00 \\
API Latencies & prom & 100.00 &   52.67 &  126.47 & 145.16 &  38.95 & 10.63 &   5.66 &   7.31 &   8.58 &    8.86 &     8.89 &  8.89 \\
                    & tdigest &   0.00 &    0.22 &    0.00 &   0.05 &   1.38 &  0.91 &   0.07 &   0.03 &   0.03 &    0.01 &     0.00 &  0.00 \\
                    & hdr &   0.18 &    0.11 &    0.42 &   0.37 &   0.17 &  0.34 &   0.10 &   0.38 &   0.01 &    0.21 &     0.18 &  0.17 \\
                    & dd &   0.00 &    0.46 &    0.54 &   0.17 &   0.69 &  0.01 &   0.16 &   0.40 &   0.98 &    0.71 &     0.68 &  0.00 \\
                    & circllhist &   0.13 &    0.00 &    0.00 &   0.00 &   0.01 &  0.02 &   0.11 &   0.15 &   0.20 &    0.22 &     0.22 &  0.22 \\
Simulated Latencies & prom & 100.00 & 1711.93 & 1526.34 & 858.07 & 294.85 & 93.91 & 129.75 &  14.93 &   4.81 &   13.55 &   437.21 & 87.35 \\
                    & tdigest &   0.00 &    0.05 &    0.05 &   0.13 &   0.25 &  0.53 &   3.69 &  11.88 &  41.43 &  122.94 &  1062.53 &  0.00 \\
                    & hdr &   0.45 &    0.54 &    0.20 &   0.07 &   0.24 &  0.60 &   0.03 &   0.47 &   0.35 &    0.63 &     0.30 &  0.38 \\
                    & dd &   0.00 &    0.55 &    0.43 &   0.97 &   0.10 &  0.41 &   0.79 &   0.96 &   0.06 &    0.87 &     0.06 &  0.00 \\
                    & circllhist &   1.78 &    0.03 &    0.00 &   0.00 &   0.01 &  0.07 &   0.02 &   0.02 &   0.00 &    0.70 &     0.42 &  0.53 \\
\bottomrule
\end{tabular}

    \end{minipage}
    \caption{Relative errors for quantile calculation in percent.}
    \label{tab:acc}
  \end{subfigure}
\end{figure}

The first thing to note, is that the three histogram-based methods (HDR, DDSketch and circllhist)
all compute quantiles with a relative error of below 2\%.

In the body of the distribution accuracy of the circllhist is generally a little better than that of
the DDSketch and the HDR Histogram.  This is due to fact that circllhist uses fair resampling for
quantile calculations, whereas DDSketch uses paretro midpoint resampling.  It's also visible, that
DDSketch tracks min and max values separately, and reports those values exactly.

The accuracy of the t-digest is very high for the ``Uniform Distribution'' dataset and on the tails
of the ``API Latency Dataset''. However, the high quantiles of the ``Simulated Latencies'' dataset
have relative errors of more than 100\%. This example illustrates, that the t-digest does not give
any accuracy guarantees for quantile approximation after multiple merging steps have been performed.
There are a-priori error bounds for the initial data ingestion, but those are not guaranteed to hold
after merging steps. This particular dataset involves a challenging merge of 1000 batches with
highly variable distribution and size. As the authors of \cite{tdigest} write:

\begin{displayquote}
  We can force a digest formed by merging other digests to be fully merged by combining centroids
  wherever consecutive clusters taken together meet the size bound. The resulting t-digest will not
  necessarily be the same as if we had computed a t-digest from all of the original data at once
  even though it will meet the same size constraint. $[\dots]$ This loss of strictly ordering makes
  it difficult to compute rigorous error bounds.
\end{displayquote}

Our example suggests, that general rigorous error bounds are unlikely to exist.

Prometheus quantiles are only accurate for the ``Uniform Distribution'' dataset.
For the others errors of >100\% are not uncommon.

\section{Conclusion}

In this article we have introduced the circllhist as a data-structure for summarizing data which
allows accurate quantile calculations on aggregates.  To do so, we developed a general theory of
log-linear histograms, and established a-priori error bounds for reconstructed samples ($4.76\%$)
and computed quantiles ($10\%$).

We compared this data-structure to alternative data-structures which are employed in practice for
aggregated quantile calculations: Prometheus Histograms \cite{prom}, t-digest \cite{tdigest}, HDR
Histograms \cite{hdr}, and DDSketches \cite{dd}.

We have seen that, like the circllhist, also HDR Histograms and DDSketches arise as special cases of
abstract log-linear histograms as described in this document. As a result the differences between
these methods come down to configuration choices and implementation details.  In particular, all
three methods operate on essentially unbounded data ranges, offer fast insertion and merge
performance, approximate quantiles with comparable high accuracy ($<2\%$).

The Prometheus histogram, is a basic histogram data-structure that requires explicit configuration
of bin boundaries. It's reliance on numeric time-series as backing data-structures makes using large
numbers of bins impractical. With the recommended number of 10 bins, the quantile accuracy was not
competitive with the other considered methods.

The t-digest, is the only considered method that is not based on histograms. It brings competitive
insertion and merge performance, as well as very fast quantile calculations. It also delivered
precise quantile estimates for most of the considered cases. However, it does not guarantee a-priori
bounds on the relative error, as do the log-linear histogram methods. This could be seen in one of
the synthetic data-sets where large deviations of quantile values could be experienced after
multiple merges had taken place.

In comparison to the other methods, the circllhist is the oldest available method, dating back to
2011 when it was first introduced in the Circonus product. It's a one-size fit's all method that
does not require any configuration and delivers unbounded data range, full mergeability, and good
accuracy (typically $<1\%$, worst-case $<10\%$) for various summary statistics including quantile
calculations. It comes with a mature and polished implementation that delivers best-in class
performance for insertion and merge operations. The circllhist has been used in the last decade at
countless internal and third-party sites, for a highly diverse set of applications including high
volume load balancing\footnote{as part of Envoy Proxy \url{https://www.envoyproxy.io/}}, in-kernel
latency measurements\cite{HHBPF} and general application performance monitoring.

\bibliographystyle{unsrt}

\end{document}